\documentclass[12pt]{amsart}

\setlength\marginparwidth{0.7in}

\usepackage[vcentermath]{youngtab}
\usepackage{pgf}
\usepackage{graphicx}
\usepackage{amsmath, amsthm, amssymb}

\usepackage{young}

\newcommand\Z{{\mathbb Z}}
\newcommand\N{{\mathbb N}}

\newcommand\C{{\mathbb C}}

\newcommand\be{{\bf 1}}

\newcommand\Cc{{\mathcal C}}
\newcommand\Dd{{\mathcal D}}
\newcommand\Tt{{\mathcal T}}
\newcommand\Ff{{\mathcal F}}
\newcommand\Oo{{\mathcal O}}
\newcommand\Aa{{\mathcal A}}
\newcommand\Zz{{\mathcal Z}}

\newcommand\Bb{{\mathcal B}}
\newcommand\ssl{\mathfrak{sl}}
\newcommand\asl{\hat{\mathfrak{sl}}}
\newcommand\aslnm{(\hat{\mathfrak{sl}}_n)_m}
\newcommand\aslmn{(\hat{\mathfrak{sl}}_m)_n}

\newcommand\Ccaslnm{\mathcal{C}(\hat{\mathfrak{sl}}_n)_m}
\newcommand\Ccaslmn{\mathcal{C}(\hat{\mathfrak{sl}}_m)_n}

\DeclareMathOperator{\Hom}{Hom}

\DeclareMathOperator{\FP}{FPdim}
\DeclareMathOperator{\SL}{SL}
\DeclareMathOperator{\Tr}{Tr}

\newcommand\aLam{\widehat{\Lambda}}
\newcommand\alam{\widehat{\lambda}}
\newcommand\flam{\underline{\lambda}}
\newcommand\fLam{\underline{\Lambda}}


\newcommand\alamt{\widehat{\lambda^t}}

\newtheorem{thm}{Theorem}[section]
\newtheorem{prop}[thm]{Proposition}
\newtheorem{lem}[thm]{Lemma}
\newtheorem{cor}[thm]{Corollary}

\theoremstyle{remark}
\newtheorem{rem}[thm]{Remark}

\theoremstyle{definition}

\newtheorem{eg}[thm]{Example}
\address{Department of Mathematics, University of Oregon, Eugene, OR 97403-1222, USA}
\author{Victor Ostrik}
\author{Michael Sun}
\email{vostrik@uoregon.edu, msun@uoregon.edu}
\dedicatory{Dedicated to Igor Frenkel on the occasion of his 60th birthday}
\title{Level-rank duality via tensor categories}
\begin{document}
\begin{abstract} We give a new way to derive branching rules for the conformal embedding
$$(\asl_n)_m\oplus(\asl_m)_n\subset(\asl_{nm})_1.$$ In addition, we show that
the category $\Cc(\asl_n)_m^0$ of degree zero integrable highest weight $(\asl_n)_m$-representations is braided equivalent to $\Cc(\asl_m)_n^0$ with the reversed braiding. 
\end{abstract}

\maketitle

\section{Introduction}

Natural multiplicity-free representations and their associated dualities, such as Schur-Weyl duality and Howe duality, play a prominent role in the classical representation theory. In 1981, I.~Frenkel \cite{Fr} discovered an example of such a duality in the setting of affine Lie algebras. This duality relates integrable highest weight representations of affine Lie algebras $\asl_m$ at level $n$ and $\asl_n$ at
level $m$ and is known as {\em level-rank duality}. Since \cite{Fr}, this duality and its implications for conformal field theory, as well as algebraic geometry, has attracted considerable interest. See for example \cite{ABI,B,MO,NT,X1}.

The first goal of this paper is to give a simple proof of a statement that is central to level-rank duality. This statement says that the restriction of
a simple integrable highest weight representation of $\asl_{mn}$ at level $1$ to $\asl_{m}\oplus\asl_n$ is
multiplicity-free, and moreover, includes a precise description of its direct summands. This is presented as Theorem \ref{brules} in this paper. Our proof is based on ideas from \cite{W} and thus we show that the somewhat physical arguments from \cite{W} make perfect sense in the framework of tensor categories. In addition, we provide an argument which completes the proof in \cite{W}. The main ingredients
of the proof are the classical skew Cauchy formula (see Proposition \ref{scf}), 
the formula of R.~Stanley for
quantum dimensions of $\asl_m$-modules (see Proposition \ref{stf}), 
the formula of J.~Fuchs for tensoring by invertible objects in a category of integrable highest weight
$\asl_n-$modules (see Proposition \ref{rotation})
and the theory of commutative algebras in braided
tensor categories (see Section \ref{etce}). 

The second goal of this paper is to make precise the relationship between tensor categories associated
with $\asl_m$ at level $n$ and $\asl_n$ at level $m$. Our main result in this direction is Theorem
\ref{equi} which states that certain large tensor subcategories of the categories in question are braided equivalent (as long as one of the braidings is reversed). This result seems
to be well-known to the experts but we did not find it in the literature, nonetheless \cite{B} features a closely related result. Notice that Theorem \ref{equi} and the more elaborate result discussed
in Remark \ref{dyseq} imply the usual identities between fusion coefficients and entries 
of the $S-$matrix, for example \cite[(2.28)]{ABI} and \cite[(1.2)]{NT}.
As another application of Theorem \ref{equi} we give a purely algebraic version of 
some results of Feng Xu from \cite{X2} based on his theory
of {\em mirror extensions}.

We hope that the approach developed in this paper will be useful for other instances of level-rank
duality, such as those relating the affine symplectic and orthogonal Lie algebras, as well as for the general study of branching rules under conformal embeddings. 

It is with great pleasure that we dedicate this paper to Igor Frenkel who discovered the phenomena discussed
here. The first named author also thanks Peter Tingley for interesting discussions, as well as Alexey Davydov and Feng Xu for their useful comments. Both authors wish to thank the referees for their helpful suggestions.

\section{Preliminaries}


\subsection{Combinatorics}
Let $\lambda =(\lambda_1\ge \ldots \lambda_k>0=\lambda_{k+1}=\ldots)$ be a \emph{partition} of $|\lambda|:=\lambda_1+\ldots +\lambda_k$ (see e.g. \cite{Mac}). We define $h(\lambda):=k$ and will identify $\lambda$ with its corresponding \emph{Young diagram}, thus $h(\lambda)$ is just the number of rows in this diagram. We write $I_n$ for the set of all partitions with $h(\lambda)\le n$.
Let $I_{n,m}$ be the set of all $\lambda \in I_n$ with $\lambda_1 \le m$. Hence $\lambda \in I_{n,m}$ if and only if its Young diagram
fits into an $m\times n$ rectangle. Denote by $\lambda^t$ the \emph{transposed} partition of $\lambda$. Clearly, $\lambda\in I_{n,m}$ implies $\lambda^t\in I_{m,n}$.

Let $C_n:=\N^{n-1}$ be the set of \emph{dominant $\ssl_n-$weights} and let $C_{n,m}:=\{ (a_0,a_1,\ldots a_{n-1})\in \N^{n}|\sum_ia_i=m\}$ be the set of \emph{dominant $\asl_n-$weights of level $m$}.
For any $\lambda \in I_n$ we define $w_n(\lambda):=(\lambda_1-\lambda_2,\lambda_2-\lambda_3,\ldots,\lambda_{n-1}-\lambda_n)\in C_n$. 
For any $\lambda \in I_{n,m}$ we define $w_{n,m}(\lambda):=(m-\lambda_1+\lambda_n,
\lambda_1-\lambda_2,\lambda_2-\lambda_3,\ldots,\lambda_{n-1}-\lambda_n)\in C_{n,m}$.  The maps $w_n$ and $w_{n,m}$ are surjective but not injective. The map $d_{n,m}: C_{n,m}\to
I_{n,m}$ sending $(a_0,a_1,\ldots a_{n-1})$ to partition $(a_1+\ldots +a_{n-1}, a_2+\ldots +a_{n-1},\ldots,a_{n-1},0,\ldots)$
is a one-sided inverse of $w_{n,m}$. Thus $d_{n,m}$ is a bijection between $C_{n,m}$ and $I_{n-1,m}$.

Observe that
\begin{equation}\label{dmn}
|d_{n,m}(a_0,\ldots,a_{n-1})|=\sum_iia_i.
\end{equation}
For $a=(a_0,\ldots,a_{n-1})\in C_{n,m}$, we say that $|d_{n,m}(a)|+n\Z\in\Z/n\Z$ is the \emph{degree} of $a$ and write it $\deg(a)$. 
Let $\rho_n: C_{n,m}\to C_{n,m}$ be the cyclic permutation $\rho_n(a_0,a_1,\dots,a_{n-1})=(a_{n-1},a_0,\dots,a_{n-2})$.
It is clear from equation \eqref{dmn} that 
\begin{equation}\label{degrot}
\deg(\rho_n(a))=\deg(a)+m.
\end{equation}

For any $i\in \Z$ let $C_{n,m}^i=\{ a\in C_{n,m}|\deg(a)=i\pmod{n}\}$. We define
$\tau_i^{n,m}: C_{n,m}^i\to C_{m,n}^i$ by 
\begin{equation}\label{dtau}
\tau_i^{n,m}(a)=\rho_m^{\frac{i-|d_{n,m}(a)|}n}(w_{m,n}(d_{n,m}(a)^t)).
\end{equation}
It follows from equation \eqref{degrot} and the definition of $w_{m,n}$ that indeed $\tau^{n,m}_i(a)\in C_{m,n}^i$. 
We set $\tau^{n,m}:=\tau_0^{n,m}$.
Notice that the sets $C_{n,m}^i, C_{m,n}^i$ and the map $\tau^{n,m}_i$ depend only on $i$ modulo $mn$. In what follows we will often write $a^{\tau_i}$ and $a^\tau$ for $\tau_i^{n,m}(a)$ and 
$\tau_0^{n,m}(a)$ when the values of $n$ and $m$ are clear from the context.


\subsection{Lie algebras}
\subsubsection{Finite dimensional Lie algebras}
Let $\ssl_n$ be the Lie algebra of traceless $n\times n$ matrices over $\C$. 
It is well-known that simple finite dimensional $\ssl_n-$modules are parametrized by
their highest weights, that is by the set $C_n$ (see e.g \cite{Hum}). For $\lambda \in I_n$ we will denote by $\flam$ the simple $\ssl_n-$module
with highest weight $w_n(\lambda)$.  We set $\Lambda_0=0\in C_{mn}$; also for $i=1,\ldots, mn-1$ let $\Lambda_i=(\delta_{ij})_{1\leq j< nm}\in C_{mn}$, 
which are weights of $\ssl_{nm}$ with corresponding representations $\fLam_i$. Thus $\fLam_0$ is the trivial representation
of $\ssl_{nm}$ and $\fLam_1=\C^{nm}$ is the natural representation.

The identification $\C^n\otimes\C^m=\C^{nm}$ defines an embedding $\ssl_n\oplus\ssl_m\subset\ssl_{nm}$.
Let $\Tr_n$ be the \emph{trace form} $\Tr_n(X,Y)=\Tr(XY)$ on $\ssl_n$.

\begin{lem}\label{trace form}
$\Tr_{nm}|_{\ssl_n\oplus\ssl_m}=m\Tr_n\oplus n\Tr_m$.\qed
\end{lem}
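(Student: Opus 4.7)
The plan is to unwind the embedding $\ssl_n\oplus\ssl_m\subset\ssl_{nm}$ coming from the identification $\C^n\otimes\C^m=\C^{nm}$, and then reduce everything to the multiplicativity of trace under tensor products, i.e.\ $\Tr(A\otimes B)=\Tr(A)\Tr(B)$.

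First I would fix the description of the embedding: an element $X\in\ssl_n$ acts on $\C^n\otimes\C^m$ as $X\otimes \mathrm{id}_m$, and an element $Y\in\ssl_m$ acts as $\mathrm{id}_n\otimes Y$. With this in hand, for $X_1,X_2\in\ssl_n$ I compute
$$\Tr_{nm}(X_1\otimes \mathrm{id}_m,\,X_2\otimes \mathrm{id}_m)=\Tr\bigl((X_1X_2)\otimes \mathrm{id}_m\bigr)=\Tr(X_1X_2)\cdot m = m\Tr_n(X_1,X_2),$$
and symmetrically $\Tr_{nm}(\mathrm{id}_n\otimes Y_1,\mathrm{id}_n\otimes Y_2)=n\Tr_m(Y_1,Y_2)$ for $Y_1,Y_2\in\ssl_m$.

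Next I would verify that the two summands are orthogonal with respect to $\Tr_{nm}$: for $X\in\ssl_n$ and $Y\in\ssl_m$,
$$\Tr_{nm}(X\otimes \mathrm{id}_m,\,\mathrm{id}_n\otimes Y)=\Tr(X\otimes Y)=\Tr(X)\cdot\Tr(Y)=0,$$
where the last equality uses precisely that $X$ and $Y$ are traceless. Combining these three computations gives the claimed equality $\Tr_{nm}|_{\ssl_n\oplus\ssl_m}=m\Tr_n\oplus n\Tr_m$.

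There is no real obstacle here; the only point that needs a moment's care is the vanishing of the cross term, which is exactly where the restriction to traceless matrices (as opposed to $\mathfrak{gl}_n\oplus\mathfrak{gl}_m$) is used.
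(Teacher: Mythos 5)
Your proof is correct and is exactly the standard computation the paper has in mind when it states this lemma without proof: identify the embedding as $X\mapsto X\otimes \mathrm{id}_m$, $Y\mapsto \mathrm{id}_n\otimes Y$, use multiplicativity of the trace on tensor products, and note that the cross term vanishes by tracelessness. Nothing is missing.
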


Since $\fLam_i$ is isomorphic to the exterior power $\wedge^i(\C^{nm})$, the following result is just the
classical Skew Cauchy Formula, see e.g. \cite[8.4.1]{P}.

\begin{prop}\label{scf}The restriction of $\fLam_i$ to $\ssl_n\oplus\ssl_m$ is isomorphic to
$$\bigoplus_{\lambda \in I_{n,m}, |\lambda|=i}\flam\otimes \flam^t.$$\qed
\end{prop}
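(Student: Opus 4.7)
The plan is to reduce the statement to the classical dual (``skew'') Cauchy identity for Schur polynomials, applied to the $GL_n\times GL_m$-action on $\C^n\otimes\C^m$.

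First I would fix, as $\ssl_n\oplus\ssl_m$-modules, the identification $\fLam_1=\C^{nm}=\C^n\otimes\C^m$ that defines the embedding $\ssl_n\oplus\ssl_m\subset\ssl_{nm}$; this also makes sense at the level of $GL_n\times GL_m$. Since $\fLam_i\cong\wedge^i(\C^{nm})$ as $\ssl_{nm}$-modules, and restriction commutes with the exterior power functor, the question becomes the decomposition of $\wedge^i(\C^n\otimes\C^m)$ as a $GL_n\times GL_m$-module (from which the $\ssl_n\oplus\ssl_m$-decomposition follows because $GL_n\times GL_m$ and $\ssl_n\oplus\ssl_m$ have the same invariant subspaces on any polynomial representation).

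Next I would invoke the dual Cauchy identity, which on the symmetric-function side reads
\begin{equation*}
\prod_{i,j}(1+x_iy_j)=\sum_{\lambda}s_\lambda(x)\,s_{\lambda^t}(y),
\end{equation*}
and on the representation-theoretic side (this is precisely the content of \cite[8.4.1]{P}) gives
\begin{equation*}
\wedge^\bullet(\C^n\otimes\C^m)\;\cong\;\bigoplus_{\lambda}S^\lambda(\C^n)\otimes S^{\lambda^t}(\C^m),
\end{equation*}
with the $i$-th graded piece being the sum over partitions $\lambda$ with $|\lambda|=i$. I would then extract the degree-$i$ part and use the standard vanishing of Schur functors: $S^\lambda(\C^n)=0$ unless $h(\lambda)\le n$, and $S^{\lambda^t}(\C^m)=0$ unless $h(\lambda^t)\le m$, i.e. $\lambda_1\le m$. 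Together these two conditions cut the sum down to $\lambda\in I_{n,m}$.

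Finally I would match the Schur functors with the notation of the paper. By construction, $S^\lambda(\C^n)$ is the simple $GL_n$-representation of highest weight $(\lambda_1,\ldots,\lambda_n)$, hence as an $\ssl_n$-module it is the simple module of highest weight $(\lambda_1-\lambda_2,\ldots,\lambda_{n-1}-\lambda_n)=w_n(\lambda)$, which is exactly $\flam$; similarly $S^{\lambda^t}(\C^m)\cong\flam^t$. Substituting yields the claimed decomposition. The only real step of substance is the dual Cauchy identity itself, which is standard and can either be cited from \cite[8.4.1]{P} or proved directly by a Jacobi--Trudi computation or by the Lindstr\"om--Gessel--Viennot lattice path bijection; everything else is bookkeeping about weight conventions and vanishing ranges.
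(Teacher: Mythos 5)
Your proposal is correct and follows exactly the route the paper takes: the paper also identifies $\fLam_i$ with $\wedge^i(\C^{nm})$ and reduces the statement to the classical skew (dual) Cauchy formula, citing the same reference \cite[8.4.1]{P}. The extra bookkeeping you supply (the vanishing of Schur functors outside $I_{n,m}$ and the translation from $GL_n\times GL_m$ highest weights to $w_n(\lambda)$) is sound and merely makes explicit what the paper leaves implicit.
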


\subsubsection{Affine Lie algebras}
Let $\asl_n$ be the affine Lie algebra corresponding to $\ssl_n$ (see e.g. \cite[Chapter 7]{Kac} or \cite[Section 7.1]{BK}). Thus
$\asl_n=\ssl_n\otimes \C[t,t^{-1}]\oplus \C K$ where the element $K$ is central and
$$[x\otimes t^a,y\otimes t^b]=[x,y]\otimes t^{a+b}+aTr_n(x,y)\delta_{a,-b}K.$$
An $\asl_n-$module is of level $k\in \C$ if $K$ acts as multiplication by $k$.

For an integer $m>0$, let $\Cc\aslnm$ denote the category whose objects are finite direct sums of
integrable highest weight $\asl_n$-modules of level $m$ (see \cite[Chapter 10]{Kac} or \cite[Section 7.1]{BK}). The category $\Cc\aslnm$ is semisimple and the isomorphism classes of simple objects in $\Cc\aslnm$ are parametrised by $C_{n,m}$ via their highest weights, see {\em loc. cit}.
Given $\lambda \in I_{n,m}$ or $a\in C_{n,m}$ we will write $\alam$ or $\hat a$ for a simple object in $\Cc\aslnm$ with highest weight $w_{n,m}(\lambda)$ or $a$. We will write $\aLam_0,\dots,\aLam_{nm-1}$ for the simple objects in $\Cc(\asl_{nm})_1$.

For $i\in \Z/n\Z$ let $\Cc\aslnm^i$ be the full subcategory of $\Cc\aslnm$ consisting of direct sums of simple modules with highest weights of degree $i$. Thus we have a decomposition
\begin{equation}\label{graded}
\Cc\aslnm =\bigoplus_{i\in \Z/n\Z}\Cc\aslnm^i.
\end{equation}

The embedding $\ssl_n\oplus\ssl_m\subset\ssl_{nm}$ induces an embedding $\asl_n\oplus\asl_m\subset\asl_{nm}$.
Lemma \ref{trace form} shows that pulling back $(\asl_{nm})_k$-modules under this embedding gives $\asl_n-$modules of level $mk$ and
$\asl_m-$modules of level $nk$. We express this symbolically as: 
\begin{equation}\label{embedk}(\asl_n)_{mk}\oplus(\asl_m)_{nk}\subset(\asl_{nm})_k.\end{equation}

It is known (see e.g. \cite[Section 12.10]{Kac}) that the restrictions of integrable highest weight $\asl_{nm}-$modules decompose as (possibly infinite) direct sums of integrable highest weight $\asl_n\oplus\asl_m-$modules. An irreducible $\asl_n\oplus\asl_m-$module decomposes as a tensor product of an irreducible $\asl_n-$module and an irreducible $\asl_m-$module.

Let $\asl_n^+\subset \asl_n$ be the subalgebra spanned by $\{x\otimes t^a\,: x\in \ssl_n, a>0\}$. For $V\in \Cc\aslnm$ we set
$$(V)^{\asl_n^+}=\{ v\in V \,:\, xv=0\quad\text{for all }x\in \asl_n^+\} .$$ 
It is clear that $(V)^{\asl_n^+}$ is a subspace of $V$ invariant under $\ssl_n=\ssl_n\otimes 1\subset \asl_n$.
The following result is well known.

\begin{lem}\label{fplus} 
For any $\lambda \in I_{n,m}$ we have an isomorphism of $\ssl_n-$modules $(\alam)^{\asl_n^+}=\flam$.
\end{lem}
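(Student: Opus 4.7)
The plan is to identify $V := (\alam)^{\asl_n^+}$ with the $\ssl_n$-submodule $U(\ssl_n) v_\lambda$ of $\alam$ generated by the $\asl_n$-highest weight vector, and to show this submodule is isomorphic to $\flam$. First I would verify that $V$ is $\ssl_n$-stable: since $[x \otimes 1, y \otimes t^a] = [x,y] \otimes t^a \in \asl_n^+$ for $a > 0$, the subalgebra $\ssl_n = \ssl_n \otimes 1$ normalizes $\asl_n^+$, and hence preserves its invariants in any representation. Moreover, $\alam$ carries the standard energy grading $\alam = \bigoplus_{d \geq 0} \alam[d]$ into finite-dimensional $\ssl_n$-submodules $\alam[d]$, each of which is semisimple by Weyl's theorem, so $\alam$ is semisimple as an $\ssl_n$-module, and so is $V$.

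Next I would show $\flam \hookrightarrow V$. By construction, $v_\lambda$ is annihilated by $\asl_n^+$, has $\ssl_n$-weight $w_n(\lambda)$ (the projection of $w_{n,m}(\lambda)$ to the finite Cartan), and is killed by the positive nilpotent $\mathfrak{n}^+ \subset \ssl_n$. Hence $U(\ssl_n) v_\lambda \subseteq V \cap \alam[0]$ is a finite-dimensional $\ssl_n$-highest weight module of highest weight $w_n(\lambda)$, so isomorphic to $\flam$.

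For the reverse inclusion, the crucial observation is that $\mathfrak{n}^+ \oplus \asl_n^+$ is precisely the positive nilpotent subalgebra of $\asl_n$ in its standard Chevalley--Kac presentation: the positive roots of $\asl_n$ consist of the positive finite roots at loop degree zero together with all roots at positive loop degree. Consequently, any $\ssl_n$-highest weight vector of $V$ is an $\asl_n$-singular vector in the simple module $\alam$, and must therefore be a scalar multiple of $v_\lambda$. By the semisimplicity established above, the one-dimensional space of $\ssl_n$-highest weight vectors in $V$ forces $V$ to be a single simple $\ssl_n$-summand, namely $\flam$. The main subtlety is checking that $V$ respects the $L_0$-grading (hence inherits semisimplicity from $\alam$); this follows because $\asl_n^+$ strictly decreases $L_0$-degree, so if $v = \sum_d v_d$ lies in $V$ then each homogeneous component $v_d$ also lies in $V$.
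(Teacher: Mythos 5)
Your argument is correct, and it reaches the conclusion by a somewhat different mechanism than the paper's sketch. Both proofs start identically: $(\alam)^{\asl_n^+}$ is an $\ssl_n$-stable, integrable (hence semisimple) subspace, and the point is to show $\flam$ is its only constituent. The paper does this by noting that a copy of a simple $\ssl_n$-module $\underline{\mu}$ inside the invariants yields, by the universal property of the Weyl module $V^m_{w_n(\mu)}$, a nonzero map $V^m_{w_n(\mu)}\to\alam$; since the only integrable quotient of a Weyl module is the corresponding simple integrable module and $\alam$ is simple, this forces $\mu=\lambda$, with multiplicity one then read off from the homogeneous grading. You instead observe that $\mathfrak{n}^+\oplus\asl_n^+$ is the positive part of the affine triangular decomposition, so an $\ssl_n$-highest weight vector in the invariants is a singular vector of the simple module $\alam$ and hence a multiple of $v_\lambda$; this replaces the Weyl-module input by the elementary fact that a simple highest weight module has no other singular vectors, and it handles exhaustion and multiplicity one in a single stroke. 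Your explicit check that the invariant subspace is $L_0$-graded is precisely what the paper leaves to ``properties of the homogeneous grading,'' and you genuinely need it in your route as well, to ensure the $\ssl_n$-highest weight vectors can be taken homogeneous and are therefore honest $\asl_n$-weight vectors before the singular-vector argument applies; it is good that you made this explicit.
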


{\bf Sketch of proof.} The $\ssl_n-$module $(V)^{\asl_n^+}$ is integrable, hence it is a sum of finite dimensional modules. 
If $\flam$ is a direct summand of $(V)^{\asl_n^+}$ then we have a nonzero homomorphism $V^m_{w_n(\lambda)}\to V$
where $V^m_{w_n(\lambda)}$ is the {\em Weyl module} with highest weight $w_n(\lambda)$, see e.g. \cite[Section 7.1]{BK}. 
Since the only integrable quotient of $V^m_{w_n(\lambda)}$ is $\alam$ (or zero if $w_n(\lambda)\not \in C_{n,m}$) we deduce the result from the properties of the {\em homogeneous grading}, see \cite[Section 7.1]{BK}.\qed

\subsubsection{Conformal embeddings}
Recall that on any integrable highest weight module over an affine Lie algebra $\hat{\mathfrak{g}}$ there is an action of the Virasoro algebra 
via {\em Sugawara operators}, see e.g. \cite[Proposition 7.4.4]{BK} or \cite[Section 12.8]{Kac}; the central charge of this action is given by
\cite[7.4.5]{BK} or \cite[(12.8.10)]{Kac}. Thus on any integrable highest weight $\asl_{nm}-$module of level $k$ we have an action of the Virasoro algebra with central charge
$$c_{(\asl_{nm})_k}=\frac{(n^2m^2-1)k}{nm+k}$$
and another action arising from the restricted $(\asl_n)_{mk}\oplus(\asl_m)_{nk}-$action, which has central charge
$$c_{(\asl_{n})_{mk}\oplus (\asl_m)_{nk}}=\frac{(n^2-1)mk}{n+mk}+\frac{(m^2-1)nk}{m+nk}.$$

One of our main goals is to study the restrictions of modules under the the embedding \eqref{embedk} with $k=1$. This case is singled out because here the central charges agree with both equal to $nm-1$.
\begin{equation}\label{cequ}
 c_{(\asl_{nm})_1}=c_{(\asl_{n})_{m}\oplus (\asl_m)_{n}}.\end{equation}
Thus we say that 
\begin{equation}\label{embed}(\asl_n)_m\oplus(\asl_m)_n\subset(\asl_{nm})_1\end{equation}
is a {\em conformal embedding}. It is known that \eqref{cequ} implies that a restriction of an integrable
highest weight $\asl_{nm}-$module of level $1$ to $(\asl_n)_m\oplus(\asl_m)_n$ is a {\em finite} direct sum of simple
$(\asl_n)_m\oplus(\asl_m)_n-$modules, see \cite[13.14.6]{Kac}.
 Equivalently, the restrictions of $\aLam_i$ are finite direct sums of modules of the form
$\alam \boxtimes \widehat \mu$ with $\lambda \in I_{n,m}$ and $\mu \in I_{m,n}$, where we will use $\boxtimes$ to denote the usual tensor product of modules over $\aslnm$ and $\aslmn$ considered
as an $\aslnm\oplus \aslmn-$module to avoid confusion with the {\em fusion tensor product} $\otimes$ to appear in Section \ref{fus}. We will write $\alam \boxtimes \widehat \mu \subset \aLam_i$ if $\alam \boxtimes \widehat \mu$ appears 
in the restriction of $\aLam_i$ with nonzero multiplicity.


\section{Tensor categories}
We refer the reader to \cite[Chapters 1-3]{BK} for the basic notions of tensor categories. All tensor categories considered in this paper are {\em fusion categories} in a sense of \cite{ENO}, that is semisimple $\C-$linear rigid categories with finite dimensional spaces of morphisms, finitely many irreducible objects and an irreducible unit object. For a fusion category $\Cc$ there is a unique homomorphism from the Grothendieck ring of $\Cc$ to real numbers sending each isomorphism class to a nonnegative real number; the value of this homomorphism on the class represented by $X\in \Cc$ is called the {\em Frobenius-Perron dimension} of $X$ and written $\FP_\Cc(X)$, see \cite[Section 8.1]{ENO}. One defines $\FP \Cc =\sum_{X\in \Oo(\Cc)}\FP_\Cc(X)^2$ where $\Oo(\Cc)$ is the set  of isomorphism classes of simple objects in $\Cc$.  More generally for a \emph{Serre subcategory} $\Cc^\# \subset \Cc$ we set $\FP \Cc^\#=\sum_{X\in \Oo(\Cc^\#)}\FP_\Cc(X)^2$. We write $\Cc\boxtimes\Dd$ for the \emph{external tensor product} of $\C-$linear categories, whose objects are direct sums of pairs $(a,b)\in\Cc\times\Dd$, see \cite[Definition 1.1.15]{BK}.
It is clear that the dimension respects the external tensor product: $\FP (\Cc \boxtimes \Dd)=(\FP \Cc )(\FP \Dd)$.
\subsection{Braided tensor categories} \label{tcpr}
Recall (\cite[Definition 1.2.3]{BK}) that a monoidal category $(\Cc,\otimes)$ is \emph{braided} if there is a natural bifunctor isomorphism $c_{X,Y}:X\otimes Y\to Y\otimes X$ called a {\em braiding} subject to the hexagon axioms. 
For every braided tensor category $(\Cc,\otimes,c)$, there is a \emph{reversed} braiding on $\Cc$ given by $c_{X,Y}^{rev}=c_{Y,X}^{-1}$; a braided tensor category $\Cc$ endowed with the reversed braiding will be denoted $\Cc^{rev}$.
A functor $\Tt: \Cc\to \Dd$ is said to be a \emph{braided equivalence of categories} if it is a monoidal functor that preserves the braiding and is an equivalence of the underlying categories, see e.g. \cite[Definition 1.2.10]{BK}. 
A \emph{braid-reversing equivalence} $\Tt: \Cc\to \Dd$ is a braided equivalence $\Cc \to \Dd^{rev}$.
Two objects $X, Y$ in a braided tensor category $\Cc$ are said to \emph{mutually centralize} each other, in the sense of \cite{Mu}, if $c_{X,Y}c_{Y,X}=1_{X\otimes Y}$. For a fusion subcategory $\Dd$ of a braided category $\Cc$, its \emph{centralizer} $\Dd'$ is the full subcategory of $\Cc$ consisting of all objects that centralize every object of $\Dd$. For a tensor category $\Cc$, the \emph{Drinfeld center} $\Zz(\Cc)$ is defined to be the category whose objects are pairs $(X,\gamma_X)$ for $X$ an object of $\Cc$ and $\gamma_X:V\otimes X\simeq X\otimes V$ is a natural family of isomorphisms indexed by $V\in\Cc$, subject to certain compatibility conditions making it naturally equipped with the structure of a braided tensor category, see e.g \cite[Definition XIII 4.1]{Ka}. If $\Cc$ is braided, then we have natural braided and braid-reversing functors
\begin{equation}\label{intoz}i:\Cc\to\Zz(\Cc),\quad X\mapsto (X,c_{-,X})\end{equation}
\begin{equation}\label{intoz2}j:\Cc\to\Zz(\Cc),\quad X\mapsto (X,c_{X,-}^{-1}).\end{equation}
These functors are fully faithful and, if $\Cc$ is a fusion category, then their images satisfy $i(\Cc)'=j(\Cc)$.
Given a functor $\Ff: \Aa\to\Bb$ with $\Aa$ braided, we say $\Ff$ is \emph{central} if it factors through the forgetful functor $\Zz(\Bb)\to \Bb$ via a braided functor $\Aa\to\Zz(\Bb)$.

\subsection{Tensor category $\Cc\aslnm$}\label{fus}Let $\Cc(\hat{\mathfrak{g}})_k$ be the category consisting of finite direct sums of integrable highest weight modules of level $k$ over an (untwisted) affine Lie algebra $\hat{\mathfrak{g}}$.
It is a deep result that $\Cc(\hat{\mathfrak{g}})_k$ endowed with the so called {\em fusion tensor product} is a {\em modular tensor category}. This is denoted as $\Cc(\hat{\mathfrak{g}};k)$ in \cite[Chapter 7]{BK} where this is discussed. 
In particular, $\Cc\aslnm$ is a modular tensor category. The unit object of this category
is the {\em vacuum module} and corresponds to the empty Young diagram $\emptyset \in I_{n,m}$. The decomposition \eqref{graded} satisfies $\Cc\aslnm^i\otimes \Cc\aslnm^j\subset \Cc\aslnm^{i+j}$, which allows us to say that $\Cc\aslnm$ is
{\em graded}. (This grading is inherited from the grading of the category of finite dimensional $\ssl_n-$modules via the characters
of the center of the group $\SL_n$; see for example \cite[Proposition 7.3.8]{BK}). In particular, 
the subcategory $\Cc\aslnm^0$ is closed under tensor products making it a braided fusion category.

We get from \cite[Proposition 8.20]{ENO} that for any $i\in \Z/n\Z$
\begin{equation}\label{dimgrad}
\FP \Cc\aslnm^i=\FP \Cc\aslnm^0.
\end{equation}
\begin{equation}\label{dimgrad1}
\FP \Cc\aslnm=n\FP \Cc\aslnm^0. 
\end{equation}

We now explain how Stanley's formula \cite[Theorem 15.3]{Stanley} gives the Frobenius-Perron dimensions of simple objects in $\Cc\aslnm$.

Recall that for a box $T$ in Young diagram $\lambda$ one defines its {\em content} $c_T$ as difference of its $x$ and $y$ coordinates and {\em hooklength} $h_T$ as the number of boxes below it and to the right of it plus one, 
see for example \cite[Section I.1]{Mac}.

\begin{eg} \label{hook} Let $\lambda=(4,3,1)$. Its contents and hooklengths are 
\begin{Young}
0&1&2&3\cr
-1&0&1\cr
-2\cr
\end{Young}
and 
\begin{Young}
6&4&3&1\cr
4&2&1\cr
1\cr
\end{Young} respectively.
\end{eg}

For fixed $m$ and $n$, define the {\em quantum integers} to be: 
$$[i]:=\frac{\sin{\frac{\pi i}{m+n}}}{\sin{\frac{\pi}{m+n}}}\quad\text{for $i\in\Z$}.$$
Obviously,
\begin{equation} \label{mirror}
\quad[i]=[m+n-i].
\end{equation}

There is a general formula for dimensions of simple objects in the category $\Cc\aslnm$ defined in terms of the {\em ribbon structure} on $\Cc\aslnm$; it coincides with the Frobenius-Perron dimension because it only takes positive-real values, guaranteeing uniqueness. See for example \cite[(3.3.2)]{BK}. This formula gives the dimension of $\alam$ as a specialization of the character of $\flam$. Recall that the irreducible characters of $\ssl_n$ are
given by {\em Schur functions}. The specialization of the Schur functions mentioned above was computed by
R.~Stanley \cite[Theorem 15.3]{Stanley}. Thus combining \cite[(3.3.2)]{BK} and \cite[Theorem 15.3]{Stanley} we get
the following:

\begin{prop}\label{stf}
For any $\lambda \in I_{n,m}$ we have
\begin{equation} \label{dimf}
\FP_{\Cc\aslnm}(\alam)=\prod_{T\in \lambda}\frac{[n+c_T]}{[h_T]}.
\end{equation}\qed
\end{prop}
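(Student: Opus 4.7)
The plan is to combine the standard categorical dimension formula for the modular tensor category $\Cc\aslnm$ with Stanley's hook-content formula, and then to identify the resulting quantum dimension with the Frobenius-Perron dimension by a positivity/uniqueness argument.

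First I would recall that $\Cc\aslnm$ carries a ribbon structure, yielding a categorical (quantum) dimension $\dim_q\colon K(\Cc\aslnm)\to \C$ which is a ring homomorphism. By the quantum Weyl dimension formula \cite[(3.3.2)]{BK}, the value $\dim_q(\alam)$ for $\lambda\in I_{n,m}$ is computed as a principal specialization of the character of $\flam$, i.e.\ of the Schur polynomial $s_\lambda$, at the variables $(q^{n-1},q^{n-3},\dots,q^{-(n-1)})$ with $q=\exp(\pi i/(n+m))$. This is exactly the standard quantum analog of Weyl's formula for $\ssl_n$, with the level $m$ entering via the shift $k+h^\vee=m+n$ in the denominators.

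Next I would invoke Stanley's hook-content formula \cite[Theorem 15.3]{Stanley}, which evaluates precisely this principal specialization as
$$s_\lambda(q^{n-1},q^{n-3},\ldots,q^{-(n-1)})=\prod_{T\in\lambda}\frac{q^{n+c_T}-q^{-(n+c_T)}}{q^{h_T}-q^{-h_T}}.$$
Using $q^j-q^{-j}=2i\sin(\pi j/(n+m))$, the factors $2i$ cancel between numerator and denominator in every box, and the right-hand side becomes $\prod_{T\in\lambda}[n+c_T]/[h_T]$ in the notation of the paper. This identifies the right-hand side of \eqref{dimf} with $\dim_q(\alam)$.

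It remains to identify $\dim_q$ with $\FP_{\Cc\aslnm}$. For $\lambda\in I_{n,m}$ the contents satisfy $-(n-1)\le c_T\le m-1$ and the hooklengths satisfy $1\le h_T\le n+m-1$, so $1\le n+c_T\le n+m-1$; hence each quantum integer $[n+c_T]$ and $[h_T]$ lies in $(0,\infty)$ and $\dim_q(\alam)>0$ for every simple $\alam$. Thus $\dim_q$ is a ring homomorphism $K(\Cc\aslnm)\to\R$ taking positive values on the classes of simple objects, and the Perron-Frobenius uniqueness result \cite[Section 8.1]{ENO} forces $\dim_q=\FP_{\Cc\aslnm}$. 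The main obstacle in this approach is Stanley's combinatorial identity itself; the categorical input from \cite[(3.3.2)]{BK} is standard, and once positivity is checked the identification of quantum and Frobenius-Perron dimensions is routine.
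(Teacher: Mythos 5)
Your proposal is correct and follows essentially the same route as the paper: the authors likewise combine the ribbon-structure dimension formula \cite[(3.3.2)]{BK} (a principal specialization of the Schur function) with Stanley's hook-content evaluation, and identify the result with $\FP_{\Cc\aslnm}$ by the same positivity/uniqueness argument. Your write-up merely makes the positivity check on the quantum integers $[n+c_T]$ and $[h_T]$ explicit, which the paper leaves implicit.
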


\begin{eg} Let $n=4$, $m\geq 4$ and $\lambda$ be as in Example \ref{hook}. 
$$\FP_{\Cc(\asl_4)_m}(\alam)=\frac{[4][5][6][7][3][4][5][2]}{[6][4][3][1][4][2][1][1]}=[7][5]^2.$$
\end{eg}


Using \eqref{mirror} we get the following

\begin{cor} \label{transdim}
For any $\lambda \in I_{n,m}$ we have
$$\FP_{\Cc\aslnm}(\alam)=\FP_{\Cc\aslmn}(\alam^t).$$\qed
\end{cor}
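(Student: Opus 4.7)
The plan is to compare the two products term-by-term using Proposition \ref{stf} and the mirror symmetry \eqref{mirror} of the quantum integers. Note first that the quantum integers $[i]=\sin(\pi i/(m+n))/\sin(\pi/(m+n))$ depend only on the sum $m+n$, so they are the same whether we work in $\Cc\aslnm$ or $\Cc\aslmn$.

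Next I would set up the bijection between boxes of $\lambda$ and boxes of $\lambda^t$: a box at position $(i,j)$ in $\lambda$ corresponds to the box at position $(j,i)$ in $\lambda^t$. Under this bijection the content flips sign, $c_{T^t}=-c_T$, while the hooklength is preserved, $h_{T^t}=h_T$, since transposition exchanges the arm and leg of each hook. Applying Proposition \ref{stf} to $\alam^t\in\Cc\aslmn$ therefore gives
\begin{equation*}
\FP_{\Cc\aslmn}(\alam^t)=\prod_{T^t\in\lambda^t}\frac{[m+c_{T^t}]}{[h_{T^t}]}=\prod_{T\in\lambda}\frac{[m-c_T]}{[h_T]}.
\end{equation*}

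Finally I would invoke \eqref{mirror}: since $[m-c_T]=[(m+n)-(m-c_T)]=[n+c_T]$, the numerators in this product coincide with those of the formula \eqref{dimf} for $\FP_{\Cc\aslnm}(\alam)$, and the denominators already agree. Hence the two Frobenius--Perron dimensions are equal. There is no real obstacle here; the only point to be careful about is that the quantum integer is defined using the common parameter $m+n$ (hence unchanged by swapping $n\leftrightarrow m$), which makes the mirror identity apply simultaneously to both sides.
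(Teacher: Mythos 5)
Your proof is correct and is exactly the argument the paper intends: the paper derives Corollary \ref{transdim} from Proposition \ref{stf} simply by citing \eqref{mirror}, and your write-up supplies the standard details (contents negate and hooklengths are preserved under transposition, the quantum integers depend only on $m+n$, and $[m-c_T]=[n+c_T]$). No issues.
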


Let $\sigma_m$ be the partition $(m,0,0,\ldots)$. Then it follows easily from \eqref{dimf} that $\FP_{\Cc\aslnm}(\widehat{\sigma}_m)=1$.
Thus the object $\widehat{\sigma}_m\in \Cc\aslnm$ is {\em invertible}, that is for any simple object
$L\in \Cc\aslnm$ the object $\hat \sigma_m\otimes L$ is also simple. The following well-known result is a special case of
results in \cite{Fuchs}.

\begin{prop}\label{rotation}
 Let $L\in \Cc\aslnm$ be a simple object with highest weight $a\in C_{n,m}$. Then the highest weight of
$\hat \sigma_m\otimes L$ is $\rho_n(a)$.\qed
\end{prop}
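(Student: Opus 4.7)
By Proposition \ref{stf}, $\FP_{\Cc\aslnm}(\widehat{\sigma}_m)=1$, so $\widehat{\sigma}_m$ is invertible and $\widehat{\sigma}_m \otimes L$ is simple. Let $a'\in C_{n,m}$ denote its highest weight; the task is to prove $a'=\rho_n(a)$.

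A first, purely formal observation fixes the degree. Since $w_{n,m}(\sigma_m)=(0,m,0,\ldots,0)$, equation \eqref{dmn} gives $\deg(\widehat{\sigma}_m)=m \pmod n$, so $\widehat{\sigma}_m \in \Cc\aslnm^m$. Because the fusion product respects the grading \eqref{graded}, we obtain $\deg(a')=\deg(a)+m \pmod n$, which is consistent with $a'=\rho_n(a)$ by \eqref{degrot}. This alone does not yet pin down the bijection.

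The remainder of the plan is to realize the fusion action of $\langle \widehat{\sigma}_m\rangle$ on simple objects as the cyclic Dynkin diagram automorphism of type $A^{(1)}_{n-1}$ (rotation of the $n$ affine nodes), and to recognize this automorphism combinatorially. On level-$m$ label tuples $(a_0,\ldots,a_{n-1})\in C_{n,m}$, the one-step rotation of affine nodes acts exactly by $\rho_n$; this is the easy combinatorial half. The structural half — that the invertible objects of $\Cc\aslnm$ form a cyclic group of order $n$ generated by $\widehat{\sigma}_m$, acting via this diagram rotation — is precisely the theorem of J.~Fuchs that the authors indicate, and is the substantive input. Alternatively one can proceed by direct computation: evaluate $\widehat{\sigma}_m \otimes \widehat{a}$ via the Weyl–Kac character formula, using that $\widehat{\sigma}_m$ is a simple current (so no Kac–Walton cancellations intervene), and read off that the resulting highest weight is $\rho_n(a)$.

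The main obstacle is this last identification. Degree bookkeeping and orbit size constraints only narrow $a'$ to one of $n$ possibilities, and one must rule out all rotations other than $\rho_n$. This is exactly where either the cited result of \cite{Fuchs} or an explicit character calculation is needed; everything else in the argument is formal consequence of $\widehat{\sigma}_m$ being invertible and the grading being compatible with fusion.
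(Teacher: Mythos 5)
Your plan is essentially the paper's own treatment: the paper states this proposition with no proof, simply noting it is "a special case of results in \cite{Fuchs}," and your argument likewise reduces the substantive content (that the simple current $\widehat{\sigma}_m$ acts by the affine diagram rotation $\rho_n$) to that same citation, with the degree bookkeeping serving only as a consistency check. This matches the paper's approach, and your explicit acknowledgment that the formal parts alone cannot pin down which of the $n$ rotations occurs is accurate.
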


Proposition \ref{rotation} implies the dimensions of simple objects with highest weights $\rho_n^i(a)$ all coincide. Recall that for any $i\in \Z$ and $a\in C_{n,m}^i$ we defined $a^{\tau_i}=\tau_i^{n,m}(a)$ in \eqref{dtau}.
Combining this with Corollary \ref{transdim} we have

\begin{cor} \label{dimcor}
For any $i\in \Z$ and $a \in C_{n,m}^i$ we have
$$\FP_{\Cc\aslnm}(\widehat{a})=\FP_{\Cc\aslmn}(\widehat{a^{\tau_i}}).$$\qed
\end{cor}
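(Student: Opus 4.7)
The plan is to chain together Corollary \ref{transdim} with Proposition \ref{rotation}, using the explicit definition of $\tau_i^{n,m}$ from \eqref{dtau}.

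First, I would set $\lambda := d_{n,m}(a) \in I_{n-1,m} \subset I_{n,m}$. Since $d_{n,m}$ is a one-sided inverse of $w_{n,m}$, we have $w_{n,m}(\lambda) = a$, so $\widehat{a} = \alam$ as objects of $\Cc\aslnm$. Analogously, setting $b := w_{m,n}(\lambda^t) \in C_{m,n}$, we have $\widehat{b} = \alamt$ in $\Cc\aslmn$. By Corollary \ref{transdim} we therefore get
$$\FP_{\Cc\aslnm}(\widehat{a}) = \FP_{\Cc\aslnm}(\alam) = \FP_{\Cc\aslmn}(\alamt) = \FP_{\Cc\aslmn}(\widehat{b}).$$

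Next, I would unpack the definition \eqref{dtau}: $a^{\tau_i} = \rho_m^{k}(b)$ where $k = (i-|\lambda|)/n$. The hypothesis $a \in C_{n,m}^i$ means exactly that $|\lambda| = |d_{n,m}(a)| \equiv i \pmod n$, so $k$ is a genuine integer and the formula makes sense. By Proposition \ref{rotation}, tensoring the simple object $\widehat{b}$ by the invertible object $\widehat{\sigma}_n \in \Cc\aslmn$ produces the simple object with highest weight $\rho_m(b)$; iterating $k$ times gives $\widehat{a^{\tau_i}} = \widehat{\sigma}_n^{\otimes k} \otimes \widehat{b}$. Since $\FP_{\Cc\aslmn}(\widehat{\sigma}_n) = 1$ and the Frobenius-Perron dimension is multiplicative on tensor products, we conclude
$$\FP_{\Cc\aslmn}(\widehat{a^{\tau_i}}) = \FP_{\Cc\aslmn}(\widehat{b}),$$
and combining with the previous display finishes the proof.

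There is no real obstacle here: the statement is essentially the composition of two earlier facts (one encoding transposition-invariance via Stanley's formula, the other encoding cyclic rotation via Fuchs' formula), and the only thing to check carefully is that the exponent $k$ landing in $\rho_m^k$ is an integer, which is exactly the degree condition built into the domain $C_{n,m}^i$ of $\tau_i^{n,m}$.
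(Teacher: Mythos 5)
Your proof is correct and follows essentially the same route as the paper, which likewise obtains the corollary by combining Corollary \ref{transdim} (transposition-invariance of the dimension, via Stanley's formula) with Proposition \ref{rotation} (rotation-invariance, via tensoring with the invertible object $\widehat{\sigma}_n$). You have merely spelled out the identifications $\widehat{a}=\alam$ and $\widehat{a^{\tau_i}}=\widehat{\sigma}_n^{\otimes k}\otimes\widehat{w_{m,n}(\lambda^t)}$ and the integrality of the exponent $k$, which the paper leaves implicit.
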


Observe that $w_{nm,1}(\sigma_1)=\Lambda_1$. Hence Proposition \ref{rotation} implies that
$$\aLam_i\cong \aLam_1^{\otimes i}\; \mbox{and}\; \aLam_1^{\otimes mn}=\aLam_0\text{ in }\Cc(\asl_{nm})_1.$$
Equivalently,
 \begin{equation}\label{l1f}
\aLam_i\otimes \aLam_j\cong \aLam_k\; \; \mbox{where}\; \; k\equiv i+j \pmod{mn}.
\end{equation}
In particular, all simple objects in $\Cc(\asl_{nm})_1$ are $1$-dimensional and
\begin{equation}\label{l1d}
\FP \Cc(\asl_{nm})_1=nm.
\end{equation}

\subsection{Etale algebras and conformal embeddings}\label{etce}

An {\em \'etale algebra} in a semisimple braided tensor category $\Cc$ is defined to be an object 
$A\in \Cc$ endowed with an associative commutative unital multiplication and that 
the category $\Cc_A$ of right $A-$modules is semisimple, see \cite[Definition 3.1]{KO2}. 
An \'etale algebra $A$ is called {\em connected} if the unit object appears in $A$ with multiplicity 1.
For a connected \'etale algebra $A\in \Cc$ the category $\Cc_A$ with operation $\otimes_A$ of 
tensor product over $A$ is naturally a fusion category, see e.g. \cite[Section 3.3]{KO2}. The category $\Cc_A$ contains a full tensor Serre subcategory $\Cc_A^{dys}$ of {\em dyslectic} modules, which is also naturally braided. See for example \cite[Section 3.5]{KO2}.
The free module functor $\Ff_A:\Cc\to\Cc_A$ given by $\Ff(X)=X\otimes A$ has the structure of a central functor given by $X\mapsto (X\otimes A, c_{-,X})$, see e.g \cite[Section 3.4]{KO2}.

\begin{prop}\label{ddim} {\em (\cite[Proposition 8.7]{ENO}, \cite[Lemma 3.11]{KO2}, 
\cite[Corollary 3.32]{KO2})}
If $A$ is a connected \'etale algebra in a modular tensor category $\Cc$, then 
\begin{itemize}
\item[(i)]$(\FP_\Cc A)\FP_{\Cc_A}M=\FP_\Cc M$ for any $M\in \Cc_A$.\\ 
\item[(ii)]$(\FP_\Cc A)\FP\Cc_A=\FP\Cc $.\\
\item[(iii)] $(\FP_\Cc A)^2\FP\Cc_A^{dys}=\FP\Cc $.
\end{itemize}\qed
\end{prop}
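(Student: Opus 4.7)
My plan is to prove the three statements sequentially, building (ii) and (iii) on (i).

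For \emph{(i)}, I would use uniqueness of the Perron--Frobenius character. The category $\Cc_A$ is naturally an indecomposable left $\Cc$-module category via $X \cdot M := \Ff_A(X)\otimes_A M$, which identifies with $X \otimes M$ as an object of $\Cc$ (with $A$-action inherited from $M$). Let $G : \Cc_A \to \Cc$ denote the forgetful functor and define $d : K_0(\Cc_A) \to \R$ by $d(M) := \FP_\Cc(G(M))/\FP_\Cc(A)$. Both $\FP_{\Cc_A}$ and $d$ take positive values on simples and satisfy the module compatibility $f(X \cdot M) = \FP_\Cc(X)\,f(M)$: for $\FP_{\Cc_A}$ this uses that $\Ff_A$ is a tensor functor between fusion categories and thus preserves FP-dimension, while for $d$ it follows from $G(X \cdot M) = X \otimes G(M)$. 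Since $\Cc_A$ is indecomposable, the Perron--Frobenius theorem forces any such function to be unique up to positive scalar; normalizing at $M = A$, where $d(A) = 1 = \FP_{\Cc_A}(A)$, gives $d = \FP_{\Cc_A}$.

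For \emph{(ii)}, I would use (i) in a double-counting argument. For simple $X \in \Cc$ and $M \in \Cc_A$, Frobenius reciprocity gives $n^X_M := \dim\Hom_{\Cc_A}(\Ff_A(X), M) = \dim\Hom_\Cc(X, G(M))$, so $\Ff_A(X) = \bigoplus_M n^X_M M$ and $G(M) = \bigoplus_X n^X_M X$. The tensor-functor property of $\Ff_A$ yields $\FP_\Cc(X) = \sum_M n^X_M \FP_{\Cc_A}(M)$, while (i) gives $\FP_\Cc(A)\,\FP_{\Cc_A}(M) = \sum_X n^X_M \FP_\Cc(X)$. Expanding $\sum_X \FP_\Cc(X)^2$ by substituting the first identity, swapping the order of summation, and applying the second delivers $\FP\Cc = \FP_\Cc(A) \cdot \FP\Cc_A$.

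\emph{Part (iii) is the main obstacle.} I would pass to the Drinfeld center $\Zz(\Cc_A)$. Composing the central structure on $\Ff_A$ recalled in Section \ref{etce} with the functors (\ref{intoz}), (\ref{intoz2}) produces two braided functors $\tilde i, \tilde j : \Cc \to \Zz(\Cc_A)$ whose images mutually centralize one another. The crucial identification --- that the centralizer $\tilde j(\Cc)'$ inside $\Zz(\Cc_A)$ coincides with the natural braided embedding of $\Cc_A^{dys}$ --- is where the dyslectic hypothesis enters, since it is exactly the condition that the two a priori distinct central lifts of an $A$-module agree. Granting this, three standard facts combine: (a) Müger's theorem gives $\FP\Zz(\Cc) = (\FP\Cc)^2$ for modular $\Cc$; (b) the ENO identity gives $\FP\Zz(\Cc_A) = (\FP\Cc_A)^2$ for the fusion category $\Cc_A$; and (c) the Müger centralizer formula yields $\FP\tilde j(\Cc) \cdot \FP\tilde j(\Cc)' = \FP\Zz(\Cc_A)$. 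Together these give $\FP\Cc \cdot \FP\Cc_A^{dys} = (\FP\Cc_A)^2$, which combined with (ii) produces (iii). The genuine obstacle here is the centralizer identification in step (c), which is the content of \cite[Corollary 3.32]{KO2} and rests on the theory of commutative algebras in braided tensor categories.
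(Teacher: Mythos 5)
The paper offers no proof of this proposition at all --- it is imported verbatim from \cite[Proposition 8.7]{ENO} and \cite[Lemma 3.11, Corollary 3.32]{KO2} with a \qed --- so the only fair comparison is against those sources, and your reconstruction matches them: part (i) via uniqueness of the positive Frobenius--Perron character on the indecomposable $\Cc$-module category $\Cc_A$ (connectedness of $A$ is what guarantees indecomposability, which you should say explicitly), and part (ii) via the adjunction between $\Ff_A$ and the forgetful functor plus a swap of summation order, are both correct and are essentially the cited arguments. For (iii) you rightly isolate the crux --- the braided identification of $\Cc_A^{dys}$ with the centralizer of the image of $\Cc$ inside $\Zz(\Cc_A)$, equivalently the equivalence $\Zz(\Cc_A)\simeq \Cc\boxtimes(\Cc_A^{dys})^{rev}$ --- but you defer it to \cite{KO2}, so your treatment of (iii), exactly like the paper's, remains a citation rather than a proof; the surrounding bookkeeping (M\"uger's formula $\FP\Dd\cdot\FP\Dd'=\FP\mathcal{E}$ in the modular category $\Zz(\Cc_A)$, $\FP\Zz(\Cc_A)=(\FP\Cc_A)^2$, and part (ii)) is assembled correctly.
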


A general result \cite[Theorem 5.2]{KO} states that for any conformal embedding the pullback of the vacuum module is an \'etale algebra; moreover taking pullbacks is a braided equivalence with the category of dyslectic 
modules over this algebra.
Specializing this to the conformal embedding \eqref{embed} we get

\begin{thm}\label{A} Let $A$ be the restriction of $\aLam_0$ under the embedding \eqref{embed}. Then 
$A$ is a connected \'etale algebra in $\Cc\aslnm \boxtimes \Cc\aslmn$. Moreover, the restriction functor is a braided equivalence $\Cc(\asl_{mn})_1\cong (\Cc\aslnm \boxtimes \Cc\aslmn)_A^{dys}$ with the pullbacks of the $\aLam_i$ being precisely the simple dyslectic $A$-modules.\qed
\end{thm}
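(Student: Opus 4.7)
The plan is to apply the general theorem \cite[Theorem 5.2]{KO} cited just before the statement, which asserts that for any conformal embedding of affine Lie algebras the restriction of the vacuum module of the larger algebra is a connected étale algebra in the module category of the smaller one, and that the restriction functor is a braided equivalence onto the category of dyslectic modules. So the task reduces to verifying that the setup here is a genuine conformal embedding that meets the hypotheses of that theorem, and then identifying the simple dyslectic modules.

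First, I would observe that \eqref{cequ} gives exactly the central charge equality $c_{(\asl_{nm})_1} = c_{(\asl_n)_m \oplus (\asl_m)_n} = nm-1$ that makes \eqref{embed} a conformal embedding; together with the fact cited from \cite[13.14.6]{Kac}, this guarantees that the restriction of $\aLam_0$ is a \emph{finite} direct sum of integrable highest weight modules over $(\asl_n)_m \oplus (\asl_m)_n$, so that $A$ is a well-defined object of the fusion category $\Cc\aslnm \boxtimes \Cc\aslmn$. Next, I would invoke the standard fact that the vertex operator algebra structure on the vacuum module $\aLam_0$ of $(\asl_{nm})_1$ restricts to an extension of the vacuum VOA of $(\asl_n)_m \oplus (\asl_m)_n$, and that such a VOA extension translates into a commutative, associative, unital algebra object $A$ in the braided fusion category $\Cc\aslnm \boxtimes \Cc\aslmn$.

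Connectedness of $A$ should follow from the uniqueness (up to scalar) of the vacuum vector in $\aLam_0$: the subspace annihilated by the positive part of $\asl_n \oplus \asl_m$ contains a unique copy of the trivial $(\ssl_n \oplus \ssl_m)$-module, and hence by Lemma \ref{fplus} applied to both factors, the unit object $\widehat{\emptyset} \boxtimes \widehat{\emptyset}$ appears in $A$ with multiplicity one. With these hypotheses verified, \cite[Theorem 5.2]{KO} immediately gives that $A$ is a connected étale algebra and that the restriction functor $\Cc(\asl_{nm})_1 \to (\Cc\aslnm \boxtimes \Cc\aslmn)_A^{dys}$ is a braided equivalence. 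Since the $\aLam_i$ for $i = 0, 1, \ldots, nm-1$ exhaust the isomorphism classes of simple objects in $\Cc(\asl_{nm})_1$, their restrictions exhaust the isomorphism classes of simple dyslectic $A$-modules.

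The main obstacle is conceptual rather than computational: one must trust the translation from the VOA language of the conformal embedding into the tensor-categorical statement packaged by \cite[Theorem 5.2]{KO}. The only piece one genuinely needs to pin down by hand is the connectedness of $A$, which as indicated follows from the uniqueness of the vacuum vector; everything else is absorbed into the cited result.
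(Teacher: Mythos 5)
Your proposal is correct and follows essentially the same route as the paper, which simply specializes \cite[Theorem 5.2]{KO} to the conformal embedding \eqref{embed} established by the central charge equality \eqref{cequ}. The extra details you supply (finiteness of the restriction via \cite[13.14.6]{Kac} and connectedness via the uniqueness of the vacuum vector) are consistent with, and slightly more explicit than, the paper's one-line derivation.
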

	
\section{Branching rules}

\subsection{Overview} We state here what we will prove in \ref{Walton} and \ref{cdim}.

\begin{thm}[Branching rules]\label{brules}
Let $m,n\geq2$ and $0\le i<mn$. There is an isomorphism of $\aslnm \oplus \aslmn-$modules:
$$\aLam_i\cong \bigoplus_{a\in C_{n,m}^i}\widehat{a}\boxtimes\widehat{a^{\tau_i}}.$$
In particular, we have
$$A:=\aLam_0\cong \bigoplus_{a\in C_{n,m}^0}\widehat{a}\boxtimes\widehat{a^{\tau}}\in\Ccaslnm^0\boxtimes\Ccaslmn^0.$$
\end{thm}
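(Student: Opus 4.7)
The plan is two-pronged. First, I would establish the inclusion $\widehat{\Lambda}_i \supseteq \bigoplus_{a \in C_{n,m}^i} \widehat{a}\boxtimes\widehat{a^{\tau_i}}$ of $\asl_n\oplus\asl_m$-modules via the Skew Cauchy Formula together with Proposition \ref{rotation}; second, I would upgrade the inclusion to equality via a Frobenius--Perron dimension count driven by Stanley's formula (through Corollary \ref{dimcor}).

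For the inclusion, decompose $\widehat{\Lambda}_i|_{\asl_n\oplus\asl_m}=\bigoplus N^i_{a,b}\,\widehat{a}\boxtimes\widehat{b}$ and apply Lemma \ref{fplus} termwise to identify $(\widehat{\Lambda}_i)^{\asl_n^+\oplus\asl_m^+}$ with $\bigoplus N^i_{a,b}\,\underline{a}\otimes\underline{b}$ as an $\ssl_n\oplus\ssl_m$-module. Since $\asl_n^+\oplus\asl_m^+\subset\asl_{nm}^+$, this space contains $(\widehat{\Lambda}_i)^{\asl_{nm}^+}=\underline{\Lambda}_i$, which Proposition \ref{scf} decomposes as $\bigoplus_{\lambda\in I_{n,m},\,|\lambda|=i}\underline{\lambda}\otimes\underline{\lambda^t}$. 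A direct check using \eqref{dmn} and \eqref{dtau}---writing $\lambda=d_{n,m}(a)+c(1,\ldots,1)$ with $a=w_{n,m}(\lambda)\in C_{n,m}^i$ and $c=(i-|d_{n,m}(a)|)/n$---matches $\underline{\lambda}\otimes\underline{\lambda^t}$ with $\underline{a}\otimes\underline{a^{\tau_i}}$; this forces $N^i_{a,a^{\tau_i}}\geq 1$ for every $a\in C_{n,m}^i$ with $0\leq c\leq a_0$. For the remaining $a\in C_{n,m}^i$, those with $|d_{n,m}(a)|>i$, I would run the same Skew Cauchy argument at level $j:=|d_{n,m}(a)|$ to place $\widehat{a}\boxtimes\widehat{a^{\tau_j}}\subset\widehat{\Lambda}_j$, and then transport back to $\widehat{\Lambda}_i$ using the equivalence $\Cc_A^{dys}\simeq\Cc(\asl_{nm})_1$ from Theorem \ref{A}: the $\otimes_A$-action of $\widehat{\Lambda}_{i-j}$ sends $\widehat{\Lambda}_j$ to $\widehat{\Lambda}_i$ in $\Cc_A^{dys}$, and by Proposition \ref{rotation} it realises on the $\asl_m$-factor precisely the rotation $\rho_m^{(i-j)/n}$ that converts $a^{\tau_j}$ into $a^{\tau_i}$ by \eqref{dtau}.

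For equality, compare total Frobenius--Perron dimensions. Proposition \ref{ddim}(i) applied to $\widehat{\Lambda}_i\in\Cc_A^{dys}\simeq\Cc(\asl_{nm})_1$, which has $\FP$-dimension $1$ there, gives $\FP_\Cc(\widehat{\Lambda}_i)=\FP_\Cc(A)$. Proposition \ref{ddim}(iii) combined with \eqref{l1d} and \eqref{dimgrad1} yields $\FP_\Cc(A)^2=\FP\Ccaslnm^0\cdot\FP\Ccaslmn^0$, and Corollary \ref{dimcor} identifies this with $(\FP\Ccaslnm^0)^2$, so $\FP_\Cc(A)=\FP\Ccaslnm^0$. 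Meanwhile, Corollary \ref{dimcor} with \eqref{dimgrad} supplies $\sum_{a\in C_{n,m}^i}\FP(\widehat{a})\FP(\widehat{a^{\tau_i}}) = \sum_{a\in C_{n,m}^i}\FP(\widehat{a})^2 = \FP\Ccaslnm^i = \FP\Ccaslnm^0$, matching $\FP_\Cc(\widehat{\Lambda}_i)$. So both sides of the proposed decomposition have equal $\FP_\Cc$-dimensions, promoting the inclusion above to an equality. I expect the transport step in the inclusion to be the main technical obstacle, requiring one to verify that the $\otimes_A$-action by $\widehat{\Lambda}_{i-j}$ indeed preserves the $\asl_n$-label while executing exactly the $\rho_m$-rotation prescribed by \eqref{dtau}; the combinatorics of \eqref{dtau} and the compatibility between the $\Z/nm$-grading on $\Cc(\asl_{nm})_1$ and the bigrading on $\Ccaslnm\boxtimes\Ccaslmn$ should make this explicit.
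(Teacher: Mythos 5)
Your overall strategy---Skew Cauchy plus Lemma \ref{fplus} for the containment, transport by invertible objects to reach all of $C_{n,m}^i$, then a Frobenius--Perron dimension count---is exactly the paper's. But there is one genuine gap in your dimension count. You pass from $\FP_\Cc(A)^2=\FP\Ccaslnm^0\cdot\FP\Ccaslmn^0$ to $\FP_\Cc(A)=\FP\Ccaslnm^0$ by asserting that Corollary \ref{dimcor} identifies the right-hand side with $(\FP\Ccaslnm^0)^2$. Corollary \ref{dimcor} only compares dimensions objectwise; to convert $\sum_{b\in C_{m,n}^0}\FP(\widehat b)^2$ into $\sum_{a\in C_{n,m}^0}\FP(\widehat{a^{\tau}})^2$ you need $\tau^{n,m}:C_{n,m}^0\to C_{m,n}^0$ to be a bijection, and that bijectivity is stated in the paper as a \emph{corollary} of Theorem \ref{brules}, not an input to it (it is not obvious from \eqref{dtau}, since $w_{m,n}$ is far from injective). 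The paper closes this circle differently (Lemma \ref{dimac}): from the already-established inclusion $\oplus_{a\in C_{n,m}^0}\widehat a\boxtimes\widehat{a^\tau}\subset A$ one gets $\FP_\Cc(A)\ge\sum_a\FP(\widehat a)\FP(\widehat{a^\tau})=\sum_a\FP(\widehat a)^2=\FP\Ccaslnm^0$ (no injectivity of $\tau$ needed, since the summands are pairwise nonisomorphic already by their first factors), and by the $n\leftrightarrow m$ symmetric inclusion also $\FP_\Cc(A)\ge\FP\Ccaslmn^0$; combined with $\FP_\Cc(A)^2=\FP\Ccaslnm^0\cdot\FP\Ccaslmn^0$, both inequalities are forced to be equalities. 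You already have every ingredient for this; you just need to run the bound from both sides rather than invoke the unproved identity $\FP\Ccaslnm^0=\FP\Ccaslmn^0$.

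On the containment: the transport step you flag as ``the main technical obstacle'' is real, and it is where the paper does its one piece of nontrivial categorical work. The mechanism is Corollary \ref{crule}: $\aLam_n\cong(\be\boxtimes\widehat\sigma_n)\otimes A$ as $A$-modules, proved by noting that the free module $(\be\boxtimes\widehat\sigma_n)\otimes A$ is simple (its endomorphism algebra is $\Hom(\be\boxtimes\be,A)=\C$ because $A$ is connected and $\widehat\sigma_n$ is invertible) and admits a nonzero map to the simple $\aLam_n$ by \eqref{sigmas}. Hence $\aLam_{j+n}\cong\aLam_n\otimes_A\aLam_j\cong(\be\boxtimes\widehat\sigma_n)\otimes\aLam_j$, which visibly preserves the $\asl_n$-label and applies $\rho_m$ to the $\asl_m$-label by Proposition \ref{rotation}; iterating gives exactly the rotation you need. (A minor slip: the $a$ not reached directly at level $i$ are those with $(i-|d_{n,m}(a)|)/n\notin[0,a_0]$, which includes the case $c>a_0$ as well as $c<0$; your transport from level $j=|d_{n,m}(a)|$ covers all of these, and the paper in fact drops the case distinction entirely and always transports.)
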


\begin{cor}$\tau_i^{n,m}: C_{n,m}^i\to C_{m,n}^i$ is a bijection with inverse $\tau_i^{m,n}$.\qed
\end{cor}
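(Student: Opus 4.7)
The plan is to derive the corollary by applying Theorem \ref{brules} twice and comparing the resulting decompositions of a single module. First I observe that the embedding appearing in Theorem \ref{brules} with the roles of $n$ and $m$ swapped, namely $\asl_m\oplus\asl_n\subset\asl_{mn}$, is literally the same embedding as \eqref{embed} with the two summands of $\asl_n\oplus\asl_m$ listed in the opposite order. Thus a second application of Theorem \ref{brules} (with $(n,m)$ replaced by $(m,n)$ and the tensor factors swapped back into the order $\asl_n\oplus\asl_m$) yields
$$\aLam_i\;\cong\;\bigoplus_{b\in C_{m,n}^i}\widehat{b^{\tau_i^{m,n}}}\boxtimes\widehat{b},$$
while the original statement gives the decomposition indexed over $a\in C_{n,m}^i$ via $a\mapsto \widehat{a}\boxtimes\widehat{a^{\tau_i^{n,m}}}$.

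Next I would invoke the multiplicity-freeness built into Theorem \ref{brules}: every simple summand $\widehat{a}\boxtimes\widehat{c}$ with $a\in C_{n,m}$ and $c\in C_{m,n}$ appears in $\aLam_i$ at most once. Consequently the two multisets of pairs
$$\bigl\{(a,a^{\tau_i^{n,m}}):a\in C_{n,m}^i\bigr\}\quad\text{and}\quad\bigl\{(b^{\tau_i^{m,n}},b):b\in C_{m,n}^i\bigr\}$$
coincide as subsets of $C_{n,m}^i\times C_{m,n}^i$. Matching pairs with the same first coordinate shows that for every $a\in C_{n,m}^i$ there is a unique $b\in C_{m,n}^i$ with $b=a^{\tau_i^{n,m}}$ and $a=b^{\tau_i^{m,n}}$, giving $\tau_i^{m,n}\circ\tau_i^{n,m}=\mathrm{id}_{C_{n,m}^i}$. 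The symmetric comparison (matching on the second coordinate) yields $\tau_i^{n,m}\circ\tau_i^{m,n}=\mathrm{id}_{C_{m,n}^i}$, so the two maps are mutually inverse bijections.

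There is no real obstacle here beyond Theorem \ref{brules} itself; once the branching rules and their multiplicity-freeness are in hand, the bijectivity of $\tau_i^{n,m}$ is forced by comparing the two indexings of the same direct sum decomposition, and no further combinatorial analysis of the explicit formula \eqref{dtau} is needed.
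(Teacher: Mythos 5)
Your argument is correct and is essentially the paper's own: the corollary is stated as an immediate consequence of Theorem \ref{brules}, obtained exactly by applying the branching rules with the roles of $n$ and $m$ interchanged and comparing the two multiplicity-free decompositions of $\aLam_i$. No further comment is needed.
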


\begin{rem} It follows from the proof of Theorem \ref{brules}, otherwise easily checked combinatorially, that we can compute $\tau_i^{n,m}(a)$ in the following way: choose any $\lambda \in I_{n,m}$
with $w_{n,m}(\lambda)=a$ and compute
$$\tau_i^{n,m}(a)=\rho_m^{\frac{i-|\lambda|}{n}}(w_{m,n}(\lambda^t)).$$
\end{rem} 

The strategy of proof of Theorem \ref{brules} is a simple one and will be divided into two parts. In \ref{Walton} we show that for any $a\in C_{n,m}^i$ we have 
$\widehat{a}\boxtimes\widehat{a^{\tau_i}}\subset \aLam_i$ using the argument by M.~Walton \cite{W}, while in \ref{cdim} we show that these summands exhaust the dimension of the appropriate graded component of $\Ccaslnm\boxtimes\Ccaslmn$. This will complete the proof of Theorem \ref{brules}, as well as make the argument invoked by Walton rigorous.

\subsection{Using the classical branching rule.}\label{Walton}

\begin{prop}\label{cprop} For any $\lambda\in I_{n,m}$, 
$\alam\boxtimes\alamt\subset\aLam_{|\lambda|}.$
\end{prop}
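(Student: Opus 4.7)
The plan is to compare $\asl_n^+\oplus\asl_m^+$-invariants on both sides of the conformal branching. The starting observation is that under the embedding $\asl_n\oplus\asl_m\subset\asl_{nm}$ the positive parts are nested, $\asl_n^+\oplus\asl_m^+\subset\asl_{nm}^+$, which is immediate from $(\ssl_n\oplus\ssl_m)\otimes t^a\subset\ssl_{nm}\otimes t^a$ for $a>0$. Applying Lemma~\ref{fplus} to $\asl_{nm}$ at level~$1$ (where $\aLam_{|\lambda|}$ corresponds to the column partition $(1^{|\lambda|})\in I_{nm,1}$) yields an inclusion of $\ssl_n\oplus\ssl_m$-modules
$$\fLam_{|\lambda|}\;=\;(\aLam_{|\lambda|})^{\asl_{nm}^+}\;\hookrightarrow\;(\aLam_{|\lambda|})^{\asl_n^+\oplus\asl_m^+},$$
and by the Skew Cauchy Formula (Proposition~\ref{scf}) the left side contains $\flam\otimes\flamt$ as a simple summand.

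Next I would exploit the finiteness cited at the end of \S2.2.3 to write $\aLam_{|\lambda|}|_{\asl_n\oplus\asl_m}=\bigoplus_{(a,b)}\widehat a\boxtimes\widehat b$, a finite sum over pairs $(a,b)\in C_{n,m}\times C_{m,n}$, and compute
$$(\aLam_{|\lambda|})^{\asl_n^+\oplus\asl_m^+}\;=\;\bigoplus_{(a,b)}(\widehat a)^{\asl_n^+}\otimes(\widehat b)^{\asl_m^+}.$$
By Lemma~\ref{fplus}, each factor on the right is the simple $\ssl_n$- (resp. $\ssl_m$-) module whose highest weight is the ``classical part'' $(a_1,\dots,a_{n-1})$ (resp. $(b_1,\dots,b_{m-1})$) of the corresponding affine weight. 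Since $\flam\otimes\flamt$ is simple as an $\ssl_n\oplus\ssl_m$-module and must be a summand on the right, there exists $(a,b)$ with $\widehat a\boxtimes\widehat b\subset\aLam_{|\lambda|}$, $(a_1,\dots,a_{n-1})=w_n(\lambda)$, and $(b_1,\dots,b_{m-1})=w_m(\lambda^t)$.

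It remains to promote ``classical parts agree'' to ``affine weights agree''. The level constraints $\sum_{i=0}^{n-1}a_i=m$ and $\sum_{j=0}^{m-1}b_j=n$ uniquely determine $a_0$ and $b_0$ from the classical parts, and a direct computation shows $a_0=m-\lambda_1+\lambda_n$ and $b_0=n-\lambda^t_1+\lambda^t_m$, which coincide with the $0$-th entries of $w_{n,m}(\lambda)$ and $w_{m,n}(\lambda^t)$. Hence $\widehat a=\alam$ and $\widehat b=\alamt$, yielding $\alam\boxtimes\alamt\subset\aLam_{|\lambda|}$. The conceptually important step is the nesting $\asl_n^+\oplus\asl_m^+\subset\asl_{nm}^+$, which is what permits the classical Skew Cauchy information to be transferred to the affine branching; the rest is bookkeeping, with the level rigidity playing the role of ``completing the affine weight'' from its classical shadow.
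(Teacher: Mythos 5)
Your proof is correct and takes essentially the same route as the paper's: both rest on the nesting $\asl_n^+\oplus \asl_m^+\subset \asl_{nm}^+$, the identification $(\aLam_{|\lambda|})^{\asl_{nm}^+}=\fLam_{|\lambda|}$ from Lemma \ref{fplus}, and the Skew Cauchy Formula. You have simply made explicit the bookkeeping (computing the invariants summand by summand and recovering the affine weight from its classical part via the level constraint) that the paper compresses into ``the result follows from Lemma \ref{fplus} and Proposition \ref{scf}.''
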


\begin{proof} It is clear that $\asl_n^+\oplus \asl_m^+\subset \asl_{nm}^+$, so 
$$(\aLam_i)^{\asl_n^+\oplus \asl_m^+}\supset (\aLam_i)^{\asl_{nm}^+}=\fLam_i\; \; \mbox{see Lemma \ref{fplus}}.$$
Now the result follows from Lemma \ref{fplus} and Proposition \ref{scf}.
\end{proof}

Since $\widehat{\sigma_m^t}\cong \mathbf{1}$, Proposition \ref{cprop} immediately implies that
\begin{equation}\label{sigmas}
\mathbf{1}\boxtimes\widehat{\sigma}_n\subset\aLam_n\; \; \mbox{and}\; \; 
\widehat{\sigma}_m\boxtimes\mathbf{1}\subset\aLam_m
\end{equation}

By Theorem \ref{A}, $A=\aLam_0\in \Ccaslnm\boxtimes\Ccaslmn$ is a connected \'etale algebra and $\aLam_i\in \Ccaslnm\boxtimes\Ccaslmn$ is a simple $A-$module.

\begin{cor}\label{crule}
There are isomorphisms of $A-$modules 
$$\begin{aligned}\aLam_n&\cong(\mathbf1\boxtimes\widehat{\sigma}_n)\otimes A\\
\text{and}\quad\aLam_m&\cong(\widehat{\sigma}_m\boxtimes\mathbf{1})\otimes A.\end{aligned}$$
\end{cor}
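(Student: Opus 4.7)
The plan is to use the adjunction between the free module functor $\Ff_A$ and the forgetful functor $\Cc_A \to \Cc$, where $\Cc = \Ccaslnm \boxtimes \Ccaslmn$, and then compare Frobenius-Perron dimensions to upgrade a surjection to an isomorphism.

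First, I would convert the inclusion $\mathbf{1}\boxtimes\widehat{\sigma}_n \subset \aLam_n$ from \eqref{sigmas} into an $A$-module map. The inclusion gives a nonzero morphism in $\Cc$, and under the adjunction $\Hom_{\Cc_A}(\Ff_A(X), M) \cong \Hom_\Cc(X, M)$ (where $M$ is viewed as an $A$-module) it corresponds to a nonzero $A$-module homomorphism
$$\varphi:(\mathbf{1}\boxtimes\widehat{\sigma}_n)\otimes A\longrightarrow \aLam_n.$$
By Theorem \ref{A}, $\aLam_n$ is a simple $A$-module, so $\varphi$ is surjective.

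Next I would check that the source and target of $\varphi$ have the same Frobenius-Perron dimension in $\Cc$. Since $\widehat{\sigma}_n$ is invertible in $\Ccaslmn$, the object $\mathbf{1}\boxtimes \widehat{\sigma}_n$ is invertible in $\Cc$, hence
$$\FP_\Cc\bigl((\mathbf{1}\boxtimes \widehat{\sigma}_n)\otimes A\bigr)=\FP_\Cc(A).$$
On the other hand, the braided equivalence of Theorem \ref{A} identifies $\aLam_n \in \Cc(\asl_{mn})_1$ with a simple object of $\Cc_A^{dys}$; since braided equivalences of fusion categories preserve FPdim and $\FP_{\Cc(\asl_{mn})_1}(\aLam_n)=1$ by \eqref{l1d}, we get $\FP_{\Cc_A^{dys}}(\aLam_n)=1$, and therefore $\FP_{\Cc_A}(\aLam_n)=1$ (the inclusion $\Cc_A^{dys}\subset \Cc_A$ is a tensor subcategory). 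Applying Proposition \ref{ddim}(i) yields
$$\FP_\Cc(\aLam_n)=\FP_\Cc(A)\cdot \FP_{\Cc_A}(\aLam_n)=\FP_\Cc(A).$$
A surjection of semisimple objects with equal FPdim is an isomorphism, so $\varphi$ is an isomorphism. The second identity $\aLam_m \cong (\widehat{\sigma}_m\boxtimes \mathbf{1})\otimes A$ follows by the identical argument with the roles of $n$ and $m$ swapped.

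There is no serious obstacle here; the only thing that requires a bit of care is bookkeeping the compatibility between the adjunction, the equivalence of Theorem \ref{A}, and the FPdim identities of Proposition \ref{ddim}, so that the invertibility of $\widehat\sigma_n$ in $\Ccaslmn$ gets transported cleanly to the statement that the free $A$-module $(\mathbf{1}\boxtimes\widehat\sigma_n)\otimes A$ has the same FPdim in $\Cc$ as the simple $A$-module $\aLam_n$.
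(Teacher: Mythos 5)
Your proof is correct, and it opens exactly as the paper does: both arguments use the adjunction $\Hom_A(X\otimes A,M)\cong\Hom(X,M)$ together with \eqref{sigmas} to produce a nonzero $A$-module map $(\mathbf1\boxtimes\widehat{\sigma}_n)\otimes A\to\aLam_n$, and both use simplicity of $\aLam_n$ from Theorem \ref{A}. Where you diverge is in the finishing step. The paper shows the \emph{source} is a simple $A$-module by computing its endomorphism algebra: $\Hom_A((\mathbf1\boxtimes\widehat{\sigma}_n)\otimes A,(\mathbf1\boxtimes\widehat{\sigma}_n)\otimes A)=\Hom(\mathbf1\boxtimes\mathbf1,A)=\C$, using invertibility of $\widehat{\sigma}_n$ and connectedness of $A$; a nonzero map between simple objects in a semisimple category is then an isomorphism. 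You instead observe the map is surjective and match Frobenius--Perron dimensions, computing $\FP_{\Cc_A}(\aLam_n)=1$ via the equivalence of Theorem \ref{A} and then invoking Proposition \ref{ddim}(i). Both are sound. The paper's route is slightly more economical in its inputs: it needs only connectedness of $A$ and semisimplicity of $\Cc_A$, whereas yours imports the $\FP$ formalism (preservation of $\FP$ under tensor equivalences and under passage to the Serre subcategory $\Cc_A^{dys}\subset\Cc_A$, plus Proposition \ref{ddim}(i)) --- facts the paper does use, but only later, in the dimension count of Section 4.3. On the other hand, your endgame is arguably more robust: it would still work if one only knew the source were semisimple of the right size rather than simple. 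One small remark: your step ``$\FP_{\Cc_A^{dys}}(\aLam_n)=\FP_{\Cc_A}(\aLam_n)$'' is fine but deserves the one-line justification that $\FP$ computed in a full tensor Serre subcategory of a fusion category agrees with $\FP$ in the ambient category, by uniqueness of the positive ring homomorphism on the Grothendieck ring.
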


\begin{proof} We will only show the first isomorphism. Since we have
$$\Hom_A((\mathbf1\boxtimes\widehat{\sigma}_n)\otimes A, \aLam_n)
=\Hom(\mathbf1\boxtimes\widehat{\sigma}_n,\aLam_n)\neq0,$$
and $\aLam_n$ is simple, it is enough to show that $(\mathbf1\boxtimes\widehat{\sigma}_n)\otimes A$ is simple because the category of $A-$modules is semisimple. Hence we are reduced to the following computation, noting $A$ is connected.
$$\begin{aligned}\Hom_A((\mathbf1\boxtimes\widehat{\sigma}_n)\otimes A,(\mathbf1\boxtimes\widehat{\sigma}_n)\otimes A)&=\Hom(\mathbf1\boxtimes\widehat{\sigma}_n,(\mathbf1\boxtimes\widehat{\sigma}_n)\otimes A)\\
	&=\Hom(\mathbf1\boxtimes\mathbf1,A)=\C.\end{aligned}$$\end{proof}

\begin{lem}\label{tensn}
Assume that $\alam \boxtimes \widehat{\mu}\subset \aLam_j$. Then
$$\alam \boxtimes \widehat{\rho_m(\mu)}\subset \aLam_k\; \; \mbox{where}\; \; k\equiv j+n\pmod{mn}.$$
\end{lem}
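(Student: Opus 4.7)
The plan is to exploit that $\aLam_{j+n}\cong \aLam_j\otimes \aLam_n$ in $\Cc(\asl_{nm})_1$ by \eqref{l1f}, and translate this product into $\Ccaslnm\boxtimes\Ccaslmn$ using Theorem \ref{A}. Under the braided equivalence $\Cc(\asl_{mn})_1\cong (\Ccaslnm\boxtimes\Ccaslmn)_A^{dys}$ the fusion product becomes $\otimes_A$, so we get an isomorphism of $A$-modules $\aLam_{j+n}\cong \aLam_j\otimes_A \aLam_n$ (with subscript $j+n$ always interpreted mod $mn$).

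Next I would replace $\aLam_n$ using the description from Corollary \ref{crule}: $\aLam_n\cong (\mathbf{1}\boxtimes\widehat{\sigma}_n)\otimes A=\Ff_A(\mathbf{1}\boxtimes\widehat{\sigma}_n)$. Then the standard projection formula for a free module over a commutative algebra,
\begin{equation*}
M\otimes_A \Ff_A(X)\cong X\otimes M,
\end{equation*}
which is immediate from the $\Cc$-module structure on $\Ccaslnm\boxtimes\Ccaslmn)_A$ induced by $\Ff_A$, yields
\begin{equation*}
\aLam_{j+n}\cong (\mathbf{1}\boxtimes\widehat{\sigma}_n)\otimes \aLam_j
\end{equation*}
as objects of $\Ccaslnm\boxtimes\Ccaslmn$ (even just as $A$-modules).

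Now suppose $\alam\boxtimes\widehat{\mu}\subset\aLam_j$. Tensoring the inclusion by the invertible object $\mathbf{1}\boxtimes\widehat{\sigma}_n$ produces a subobject
\begin{equation*}
(\mathbf{1}\boxtimes\widehat{\sigma}_n)\otimes(\alam\boxtimes\widehat{\mu})=\alam\boxtimes(\widehat{\sigma}_n\otimes\widehat{\mu})\subset (\mathbf{1}\boxtimes\widehat{\sigma}_n)\otimes\aLam_j\cong\aLam_{j+n}.
\end{equation*}
Proposition \ref{rotation}, applied in $\Ccaslmn$ with the invertible object $\widehat{\sigma}_n$, identifies $\widehat{\sigma}_n\otimes\widehat{\mu}$ with $\widehat{\rho_m(\mu)}$, and the lemma follows with $k\equiv j+n\pmod{mn}$.

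The only nontrivial ingredient is the identification of $\otimes_A$-product with the fusion product on $\Cc(\asl_{nm})_1$-side, together with the projection formula $M\otimes_A\Ff_A(X)\cong X\otimes M$; both are standard in the theory of étale algebras recalled in Section \ref{etce}, so no genuine obstacle is anticipated.
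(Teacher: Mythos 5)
Your proposal is correct and follows essentially the same route as the paper: both pass to $A$-modules via Theorem \ref{A}, write $\aLam_k\cong\aLam_n\otimes_A\aLam_j\cong(\mathbf{1}\boxtimes\widehat{\sigma}_n)\otimes\aLam_j$ using \eqref{l1f} and Corollary \ref{crule}, and conclude with Proposition \ref{rotation}. The projection formula step you spell out is exactly the cancellation $A\otimes_A\aLam_j=\aLam_j$ that the paper uses implicitly.
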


\begin{proof} In view of Theorem \ref{A} we can think of $\aLam_i$ as an $A-$module in $\Cc\aslnm\boxtimes \Cc\aslmn$. Using \eqref{l1f} and Corollary \ref{crule} we get
$$\aLam_k\cong \aLam_n \otimes_A \aLam_j\cong (\mathbf1\boxtimes\widehat{\sigma}_n)\otimes A\otimes_A\aLam_j=(\mathbf1\boxtimes\widehat{\sigma}_n)\otimes \aLam_j\supset \alam \boxtimes (\widehat{\sigma}_n \otimes \widehat{\mu}).$$
Now the result follows from Proposition \ref{rotation}.
\end{proof}

\begin{prop}\label{lbound}
 For any integer $i$ with $0\le i<nm$ and $\lambda \in I_{n,m}$ with $|\lambda|\equiv i\pmod{n}$ we have
$\alam \boxtimes \widehat{\mu} \subset \aLam_i$ where $\mu =\rho_m^{\frac{i-|\lambda|}{n}}(\lambda^t)$. In particular
for any $a\in C_{n,m}^i$ we have $\widehat{a}\boxtimes \widehat{a^{\tau_i}}\subset \aLam_i$.   
\end{prop}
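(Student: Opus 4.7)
The plan is to start from Proposition \ref{cprop}, which gives the single inclusion $\alam\boxtimes\widehat{\lambda^t}\subset\aLam_{|\lambda|}$, and then iterate Lemma \ref{tensn} to ``rotate'' the $\aslmn$-factor while simultaneously shifting the $\aLam$-index by $n$ each step. Since Lemma \ref{tensn} lets us replace $\widehat{\mu}$ by $\widehat{\rho_m(\mu)}$ at the cost of replacing $\aLam_j$ by $\aLam_{j+n\bmod mn}$, applying it $k$ times yields
\[
\alam\boxtimes\widehat{\rho_m^k(\lambda^t)}\subset\aLam_{|\lambda|+kn\bmod mn}.
\]
(Here $\rho_m^k(\lambda^t)$ is interpreted via the weight $w_{m,n}(\lambda^t)\in C_{m,n}$, which is legitimate since Proposition \ref{rotation} governs tensoring by $\widehat{\sigma}_n$ directly at the level of weights in $C_{m,n}$.)

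To hit a prescribed $i$ with $0\le i<mn$ we need $|\lambda|+kn\equiv i\pmod{mn}$. The congruence hypothesis $|\lambda|\equiv i\pmod n$ is exactly what guarantees that $(i-|\lambda|)/n$ is an integer, and since $\rho_m^m=\mathrm{id}$ on $C_{m,n}$, we may take $k=(i-|\lambda|)/n$ (reduced mod $m$ if one wishes). This gives the first claim:
\[
\alam\boxtimes\widehat{\rho_m^{(i-|\lambda|)/n}(\lambda^t)}\subset\aLam_i.
\]

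For the ``in particular'' statement, given $a\in C_{n,m}^i$ I would set $\lambda:=d_{n,m}(a)\in I_{n-1,m}\subset I_{n,m}$. Since $d_{n,m}$ is a one-sided inverse of $w_{n,m}$, we have $\alam=\widehat{a}$, and by the definition of degree together with \eqref{dmn} we get $|\lambda|=|d_{n,m}(a)|\equiv\deg(a)\equiv i\pmod n$, so the hypothesis of the first part is satisfied. Comparing the resulting $\mu=\rho_m^{(i-|\lambda|)/n}(\lambda^t)$ with the definition \eqref{dtau} of $\tau_i^{n,m}$ shows that $\widehat{\mu}=\widehat{a^{\tau_i}}$, completing the proof.

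I do not foresee a serious obstacle: once Lemma \ref{tensn} and Proposition \ref{cprop} are in place, the argument is a straightforward induction on the shift $k$. The only point requiring care is notational: Proposition \ref{cprop} is phrased in terms of partitions while Lemma \ref{tensn} implicitly uses the action of $\rho_m$ on weights in $C_{m,n}$, so one must consistently identify $\widehat{\lambda^t}$ with the simple object of highest weight $w_{m,n}(\lambda^t)$ before iterating.
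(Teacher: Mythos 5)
Your proposal is correct and takes essentially the same route as the paper, whose entire proof reads: apply Lemma \ref{tensn} $k$ times to Proposition \ref{cprop}, where $k>0$ and $k\equiv\frac{i-|\lambda|}{n}\pmod{mn}$. Your extra care about interpreting $\rho_m^k(\lambda^t)$ via the weight $w_{m,n}(\lambda^t)$, and about deducing the ``in particular'' clause by taking $\lambda=d_{n,m}(a)$ and comparing with \eqref{dtau}, simply makes explicit what the paper leaves implicit.
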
 

\begin{proof} This follows from Proposition \ref{cprop} by applying Lemma \ref{tensn} $k$ times where $k>0$ and 
$k\equiv \frac{i-|\lambda|}{n}\pmod{mn}$.
\end{proof}

\subsection{Exhausting dimensions}\label{cdim}
Combining \eqref{dimgrad1}, \eqref{l1d}, Theorem \ref{A} and Proposition \ref{ddim} (iii) we obtain
\begin{equation}\label{dima}
(\FP_{\Cc\aslnm \boxtimes \Cc\aslmn}A)^2=\FP \Cc\aslnm^0\FP \Cc\aslmn^0.
\end{equation}
\begin{lem}\label{dimac}$\FP_{\Cc\aslnm \boxtimes \Cc\aslmn}A=\FP \Cc\aslnm^0=\FP \Cc\aslmn^0.$
\end{lem}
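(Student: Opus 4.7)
The plan is to derive a matching pair of inequalities
$$\FP_{\Cc\aslnm \boxtimes \Cc\aslmn} A \ge \FP\Cc\aslnm^0 \quad \text{and} \quad \FP_{\Cc\aslnm \boxtimes \Cc\aslmn} A \ge \FP\Cc\aslmn^0,$$
and combine them with the identity \eqref{dima}. These three relations together force the three quantities to coincide: if $a,b,c > 0$ satisfy $a \ge b$, $a \ge c$ and $a^2 = bc$, then $a^2 \ge a \cdot b \ge b \cdot c$ forces equality throughout, hence $a = b = c$.

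For the first lower bound I would directly apply Proposition \ref{lbound} at $i = 0$: it furnishes, for every $a \in C_{n,m}^0$, an inclusion of $\widehat{a} \boxtimes \widehat{a^\tau}$ as a summand of $\aLam_0 = A$. Distinct $a$'s give pairwise non-isomorphic simples of $\Cc\aslnm \boxtimes \Cc\aslmn$ (they are already distinguished by their first tensor factor), so these contributions to $A$ are disjoint. Summing dimensions, using multiplicativity of $\FP$ under external products, and applying Corollary \ref{dimcor} to rewrite $\FP_{\Cc\aslmn}(\widehat{a^\tau})$ as $\FP_{\Cc\aslnm}(\widehat{a})$ yields
$$\FP_{\Cc\aslnm \boxtimes \Cc\aslmn} A \ge \sum_{a \in C_{n,m}^0} \FP_{\Cc\aslnm}(\widehat{a})\,\FP_{\Cc\aslmn}(\widehat{a^\tau}) = \sum_{a \in C_{n,m}^0} \FP_{\Cc\aslnm}(\widehat{a})^2 = \FP\Cc\aslnm^0.$$

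The second lower bound is the mirror statement, obtained by swapping the roles of $n$ and $m$. The entire setup is symmetric under $n \leftrightarrow m$: the conformal embedding \eqref{embed}, the algebra $A$, and hence the hypotheses of Proposition \ref{lbound}, are invariant up to interchanging the two external-product factors. Thus for each $b \in C_{m,n}^0$ we obtain $\widehat{b^{\tau^{m,n}}} \boxtimes \widehat{b} \subset A$, and an analogous summation produces $\FP_{\Cc\aslnm \boxtimes \Cc\aslmn} A \ge \FP\Cc\aslmn^0$.

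I do not expect any real obstacle here; the proof is a clean dimension count whose only inputs are Proposition \ref{lbound} (for the two lower bounds) and \eqref{dima} (packaging Theorem \ref{A} together with Proposition \ref{ddim}(iii)) for the constraint. The only minor point to verify is that the simples exhibited by Proposition \ref{lbound} are genuinely distinct, which is automatic because a simple object of an external tensor product of semisimple categories is determined up to isomorphism by the pair of its tensor factors.
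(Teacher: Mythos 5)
Your proposal is correct and follows essentially the same route as the paper: apply Proposition \ref{lbound} at $i=0$ together with Corollary \ref{dimcor} to get the lower bound $\FP_{\Cc\aslnm \boxtimes \Cc\aslmn}A\ge \FP \Cc\aslnm^0$, obtain the mirror bound by interchanging $n$ and $m$, and combine with \eqref{dima}. The only difference is that you spell out the elementary algebra ($a\ge b$, $a\ge c$, $a^2=bc$ forces $a=b=c$) that the paper leaves implicit.
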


\begin{proof} By Proposition \ref{lbound}, $\oplus_{a\in C_{n,m}^0}\widehat{a}\boxtimes \widehat{a^\tau}\subset \aLam_0=A$. Hence
$$\begin{aligned}\FP_{\Cc\aslnm \boxtimes \Cc\aslmn}A&\ge \FP_{\Cc\aslnm \boxtimes \Cc\aslmn}(\oplus_{a\in C_{n,m}^0}\widehat{a}\boxtimes \widehat{a^\tau})\\
              \text{(Corollary \ref{dimcor})}\qquad &=\sum_{a\in C_{n,m}^0}(\FP_{\Cc\aslnm}\widehat{a})^2=\FP \Cc\aslnm^0.\end{aligned}$$
Interchanging the roles of $n$ and $m$ we get 
$$\FP_{\Cc\aslnm \boxtimes \Cc\aslmn}A\ge \FP \Cc\aslmn^0.$$
Combining these inequalities with \eqref{dima} gives the result.
\end{proof}

\begin{rem} Alternatively, one can reduce Lemma \ref{dimac} to an
elementary trigonometric identity using formula \cite[(3.3.9)]{BK} for $\FP \Cc\aslnm$.
\end{rem}
\noindent
{\bf Proof of Theorem \ref{brules}.} Since $\FP_{\Cc(\asl_{nm})_1}\aLam_i=1$ for any $i$, \eqref{dimgrad}, Proposition \ref{ddim}(i) and Lemma \ref{dimac} together give
$$\FP_{\Cc\aslnm \boxtimes \Cc\aslmn}\aLam_i=\FP_{\Cc\aslnm \boxtimes \Cc\aslmn}A=
\FP \Cc\aslnm^i.$$
On the other hand, $\aLam_i\supset\oplus_{a\in C_{n,m}^i}\widehat{a}\boxtimes \widehat{a^{\tau_i}}$ by  Proposition \ref{lbound}, hence
$$\begin{aligned}\FP_{\Cc\aslnm \boxtimes \Cc\aslmn}(\oplus_{a\in C_{n,m}^i}
\widehat{a}\boxtimes \widehat{a^{\tau_i}})&\stackrel{3.6}{=}\sum_{a\in C_{n,m}^i}(\FP_{\Cc\aslnm}\widehat{a})^2\\
																					&=\FP \Cc\aslnm^i.\qquad\qed\end{aligned}$$


\subsection{Examples}

\begin{eg} Let $n=3$ and $m=6$. Let $i=13$ and $\lambda=\tiny\yng(3,1)$. We determine which partition pairs with $\lambda$ to appear as a summand in $\aLam_{13}$ using \eqref{dtau}.
{\tiny$$\yng(3,1)\stackrel{t}{\rightarrow}\yng(2,1,1)\stackrel{w_{6,3}}{\rightarrow}(1,1,0,1,0,0)\stackrel{\rho_6^{3}}{\rightarrow}(1,0,0,1,1,0)\stackrel{d_{6,3}}{\rightarrow}\yng(2,2,2,1)$$}
$$\text{Hence }\quad{\tiny\yng(3,1)\boxtimes\yng(2,2,2,1)\subset\aLam_{13}}.$$
Note, for example, $\lambda$ does not appear as the left factor in any summand of $\aLam_9$ because $9\neq4\pmod3$.
\end{eg}

\begin{eg}Let $n=3$, $m=6$ and consider $A=\aLam_0$. We have the following decomposition:
$$\begin{aligned}A&={\tiny\mathbf1\boxtimes\mathbf1+\yng(2,1)\boxtimes\yng(2,1,1,1,1)+\yng(5,4)\boxtimes\yng(3,2,1)+
\yng(4,2)\boxtimes\yng(2,2,1,1)}\\
&{\tiny +\yng(3)\boxtimes\yng(3,3,2,2,2)
+\yng(6,3)\boxtimes\yng(2,2,2)+\yng(5,1)\boxtimes\yng(3,3,3,2,1)}\\
&{\tiny+\yng(6)\boxtimes\yng(3,3,3,3)
+\yng(3,3)\boxtimes\yng(3,1,1,1)+\yng(6,6)\boxtimes\yng(3,3)}.
\end{aligned}$$


\end{eg}

\section{A tensor equivalence}
\subsection{Main theorem} The number of isomorphism classes of simple objects in the category $\Cc\aslnm$ is the cardinality of $C_{n,m}$, that is ${n+m-1\choose n-1}$.  Thus the categories $\Cc\aslnm$ and $\Cc\aslmn$ are not equivalent even as additive categories
when $n\ne m$. However we have the following

\begin{thm}\label{equi}
There is a braid-reversing equivalence
$$\mathcal{T}: \Cc\aslnm^0\stackrel{\sim}{\rightarrow}\Cc\aslmn^{0}.$$ 
Furthermore, the equivalence $\Tt$ sends an object $\widehat{a}$ ($a\in C_{n,m}$) to the {\em dual} 
$\widehat{a^{\tau}}^*$ of $\widehat{a^{\tau}}$.
\end{thm}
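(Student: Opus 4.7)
My plan is to extract $\Tt$ from the étale algebra $A = \bigoplus_{a \in C_{n,m}^0} \widehat a \boxtimes \widehat{a^\tau}$ supplied by Theorem \ref{brules}. Since a graph algebra of a candidate equivalence $\Tt$ should take the form $\bigoplus_{X} X \boxtimes \Tt(X)^*$, the diagonal shape of $A$ pins down $\Tt(\widehat a) = \widehat{a^\tau}^*$ on simples. First I would observe that $A$ already lies inside $\Cc\aslnm^0 \boxtimes \Cc\aslmn^0$ and remains a connected étale algebra there. Combined with the fact that $a \mapsto a^\tau$ is a bijection $C_{n,m}^0 \to C_{m,n}^0$ (the corollary to Theorem \ref{brules}) and that duality permutes simples, this lets $\widehat a \mapsto \widehat{a^\tau}^*$ extend uniquely to an equivalence of $\C$-linear additive categories; Corollary \ref{dimcor} confirms that it preserves Frobenius-Perron dimensions.

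Next I would build the monoidal structure on $\Tt$ from the multiplication $m_A : A \otimes A \to A$. For simples $\widehat a, \widehat b \in \Cc\aslnm^0$ and any simple $\widehat c$ occurring in $\widehat a \otimes \widehat b$, the component of $m_A$ landing in the $\widehat c \boxtimes \widehat{c^\tau}$ summand of $A$ is a distinguished element
$$m_{a,b}^{c} \in \Hom(\widehat a \otimes \widehat b,\, \widehat c) \otimes \Hom(\widehat{a^\tau} \otimes \widehat{b^\tau},\, \widehat{c^\tau}).$$
Multiplicity-freeness of $A$ together with associativity of $m_A$ forces $m_{a,b}^{c}$ to induce a nondegenerate pairing, yielding both the fusion-coefficient identity $N^{c}_{ab} = N^{c^\tau}_{a^\tau b^\tau}$ in the two categories and natural isomorphisms $\Tt(\widehat a) \otimes \Tt(\widehat b) \cong \Tt(\widehat a \otimes \widehat b)$; the unit $\mathbf 1 \boxtimes \mathbf 1$ of $A$ handles the unit compatibility.

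The braid-reversing property would then come from commutativity of $A$. The braiding on the Deligne product splits as the product of the two braidings $c^{(1)} \otimes c^{(2)}$, so the equation $m_A \circ c_{A,A} = m_A$, read through each $m_{a,b}^{c}$, becomes a twisted compatibility between $c^{(1)}_{\widehat a, \widehat b}$ on $\Cc\aslnm$ and $c^{(2)}_{\widehat{a^\tau}, \widehat{b^\tau}}$ on $\Cc\aslmn$. Under the pairing from the previous step this is exactly $\Tt(c^{(1)}_{\widehat a, \widehat b}) = (c^{(2)}_{\Tt(\widehat a), \Tt(\widehat b)})^{-1}$, which is the braid-reversing axiom.

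\textbf{Main obstacle.} The delicate task is turning the informal moves ``$m_A$ gives tensor structure, commutativity of $A$ gives braid-reversal'' into coherent structure morphisms satisfying the pentagon and hexagon axioms. A cleaner conceptual framing would be to invoke the general principle that a connected étale algebra of ``graph form'' $\bigoplus_i X_i \boxtimes Y_i \in \Cc_1 \boxtimes \Cc_2$, whose summands exhaust the simples on each side, is automatically the graph of a braid-reversing equivalence $\Cc_1 \to \Cc_2$; applied to $A$ with the bijection $a \mapsto a^\tau$ this reduces the theorem to a citation from the theory of Lagrangian algebras. A secondary subtlety is that we work inside the degree-zero subcategories, which need not themselves be modular, so one must read off everything directly from the algebra $A$ rather than by invoking Proposition \ref{ddim}(iii) inside the subcategories.
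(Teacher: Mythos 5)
Your overall strategy---reading the equivalence off from the \'etale algebra $A=\bigoplus_{a\in C_{n,m}^0}\widehat{a}\boxtimes\widehat{a^\tau}$ of Theorem \ref{brules}, so that its graph form forces $\Tt(\widehat a)=\widehat{a^\tau}^*$ and its commutativity forces braid-reversal---is the right one and is also the paper's starting point. But as written the proposal has a genuine gap, and you have located it yourself: nothing in it actually produces coherent associativity and braiding constraints. The component-by-component analysis of $m_A$ yields at best the fusion-coefficient identity and a family of isomorphisms $\Tt(\widehat a)\otimes\Tt(\widehat b)\cong\Tt(\widehat a\otimes\widehat b)$; the pentagon and hexagon compatibilities are the entire content of the theorem and are not addressed. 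Moreover the proposed rescue---reducing to ``the theory of Lagrangian algebras''---does not apply: that dictionary is a statement about nondegenerate categories, whereas $\Cc\aslnm^0$ has nontrivial M\"uger center (generated by $\widehat{\sigma}_m^{\otimes n/\gcd(m,n)}$) whenever $\gcd(m,n)>1$, and in that case $A$ is genuinely not Lagrangian in $\Cc\aslnm^0\boxtimes\Cc\aslmn^0$: by Theorems \ref{A} and \ref{brules} the module $\aLam_{\mathrm{lcm}(m,n)}$ is a simple dyslectic $A$-module lying in the degree-zero part and not isomorphic to $A$. You flagged the non-modularity as a ``secondary subtlety,'' but it is fatal to the reduction you propose.

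The paper closes the gap by a route that avoids all coherence checking. It forms the fusion category $\Cc^0:=(\Cc\aslnm^0\boxtimes\Cc\aslmn^0)_A$ of $A$-modules, whose tensor structure comes for free from the general theory of \'etale algebras, and shows that the free-module functor $\Ff(X)=X\otimes A$ restricted to either factor $\Cc\aslnm^0\boxtimes\be$ or $\be\boxtimes\Cc\aslmn^0$ is a tensor equivalence onto $\Cc^0$: full faithfulness is a short $\Hom$ computation using the fact (from Theorem \ref{brules}) that the only summand of $A$ of the form $-\boxtimes\be$ is $\be\boxtimes\be$, and essential surjectivity follows from the dimension count $\FP\Cc^0=\FP\Cc\aslnm^0=\FP\Cc\aslmn^0$ (Proposition \ref{ddim}(ii) together with Lemma \ref{dimac}). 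Then $\Tt$ is one such equivalence composed with the inverse of the other, and $\Tt(\widehat a)\cong\widehat{a^\tau}^*$ is again a $\Hom$ computation against $A$. The braid-reversal is obtained not from $m_A\circ c_{A,A}=m_A$ directly but from the central structure of $\Ff$: the two factors land in mutually centralizing subcategories of the Drinfeld center $\Zz(\Cc^0)$, the relation $i(\Cc^0)'=j(\Cc^0)$ (valid for any braided fusion category, with no modularity hypothesis) upgrades the containment to an equality by the same dimension count, and $\Tt$ is exhibited as a composition of braided functors with exactly one braid-reversing one. If you want a citable general statement in place of this argument, the correct reference is \cite[Theorem 3.6]{KO3}, not the Lagrangian algebra correspondence.
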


\begin{rem} The highest weight of the dual $\widehat{a}^*\in \Cc\aslmn$ is $a^*:=(a_0,a_{m-1},\ldots,a_1)$ if $a=(a_0,a_1,\ldots,a_{m-1})$. Notice now that Theorem \ref{equi} implies $\tau$ commutes with $a\mapsto a^*$.
\end{rem} 

\begin{proof} The proof below is a specialization of the proof of a more 
general result \cite[Theorem 3.6]{KO3}.

Recall that the category $\Cc\aslnm^0 \boxtimes \Cc\aslmn^0$ contains an \'etale algebra $A=\aLam_0$.
Consider the fusion category $\Cc^0:=(\Cc\aslnm^0 \boxtimes \Cc\aslmn^0)_A$. Observe that by
Propositions \ref{ddim} (ii) and \ref{dimac} we have
\begin{equation}\label{eqdi}
\FP \Cc^0=\FP \Cc\aslnm^0=\FP \Cc\aslmn^0.
\end{equation}
We also have the free module functor $\Ff: \Cc\aslnm^0\boxtimes \Cc\aslmn^0\to \Cc^0, \Ff(X)=X\otimes A$. We claim that the restriction of the functor $\Ff$ to 
$\Cc\aslnm^0=\Cc\aslnm^0\boxtimes \be\subset \Cc\aslnm^0\boxtimes \Cc\aslmn^0$
(and, similarly, to $\Cc\aslmn^0$) is fully faithful. Indeed, for $a,b\in C_{n,m}$ we have
$$\Hom_A((\widehat{a}\boxtimes \be)\otimes A,(\widehat{b}\boxtimes \be)\otimes A)=
\Hom(\widehat{a}\boxtimes \be,(\widehat{b}\boxtimes \be)\otimes A)=$$
$$\Hom((\widehat{a}\otimes \widehat{b}^*)\boxtimes \be,A)=
\Hom((\widehat{a}\otimes \widehat{b}^*)\boxtimes \be,\be \boxtimes \be)=\Hom(\widehat{a},\widehat{b}),$$
where the second-last equality follows from Theorem \ref{brules} which says that the only summand of $A$ of the form $-\boxtimes \be$ is $\be \boxtimes \be$. Hence \eqref{eqdi} shows that $\Ff$ restricted to $\Cc\aslnm^0$ (and to $\Cc\aslmn^0$) is an equivalence. Composing the equivalences $\Cc\aslnm^0 \cong \Cc^0 \cong \Cc\aslmn^0$ we get the desired tensor equivalence $\Tt$. 

We proceed to show that the equivalence $\Tt$ is braid-reversing. First since $\Ff$ is a central functor, we have two (fully faithful) braided functors $\Ff_{n,m}:\Cc\aslnm^0\to\Zz(\Cc^0)$ and $\Ff_{m,n}:\Cc\aslmn^0\to\Zz(\Cc^0)$, satisfying 
\begin{equation}\label{contain}(\Ff_{m,n}\Cc\aslmn^0)'\supset\Ff_{n,m}\Cc\aslnm^0\end{equation}
by definition of $\boxtimes$. Now we consider $\Cc^0$ as a braided tensor category with the braiding induced by its equivalence with $\Cc\aslmn^0$. This gives us fully faithful braided and braid-reversing functors 
$i:\Cc^0\to\Zz(\Cc^0)$ and $j:\Cc^{0}\to\Zz(\Cc^0)$ respectively, see Section \ref{tcpr} (\ref{intoz}) and (\ref{intoz2}).
Note that the image of $i$ is the same as the image of $\Ff_{m,n}$. Hence by properties of the Drinfeld center discussed in Section \ref{tcpr}, its centralizer is the image of $j$, that is
$$(\Ff_{m,n}\Cc\aslmn^0)'=j(\Cc^{0}).$$
Since $\FP\Cc^{0}=\FP\Cc\aslnm^0$ and $j$ is fully faithful, the inclusion (\ref{contain}) is an equality and we have
$$\Ff_{n,m}\Cc\aslnm^0=(\Ff_{m,n}\Cc\aslmn^0)'=j(\Cc^{0}).$$
The functor $\Tt$ is isomorphic to a composition, 
$$\Cc\aslnm^0\stackrel{\Ff_{n,m}}{\rightarrow}\Ff_{n,m}\Cc\aslnm^0\stackrel{j^{-1}}{\rightarrow}\Cc^{0}\stackrel{i}{\rightarrow}\Ff_{m,n}(\Cc\aslmn^0)\stackrel{\Ff_{m,n}^{-1}}{\rightarrow}\Cc\aslmn^{0},$$
where all functors present except for $j^{-1}$ are braided with $j^{-1}$ braid-reversing. 
Hence $\Tt$ is braid-reversing.


Finally, $\Tt(\widehat{a})\cong\widehat{a^{\tau}}^*$ follows from Theorem \ref{brules} and computing:
$$\Hom_A((\widehat{a}\boxtimes \be)\otimes A,(\be \boxtimes \widehat{b})\otimes A)=
\Hom(\widehat{a}\boxtimes \be,(\be \boxtimes \widehat{b})\otimes A)=$$
$$\Hom(\widehat{a}\boxtimes \widehat{b}^*,A)=\left\{ \begin{array}{c} \C \; \mbox{if}\;  b^*=a^\tau; \\0\; \mbox{otherwise}.\end{array}\right.$$ 
\end{proof}

\begin{rem} \label{dyseq}
One can describe the precise relation between the categories 
$\Cc\aslnm$ and $\Cc\aslmn$ following \cite[Theorem 7.20]{FFRS} (see also \cite[Theorem 3.14]{KO3}). Namely, there is an
\'etale algebra $B\in \Cc\aslmn^{rev} \boxtimes \Cc(\asl_{nm})_1$ such that there is a braided
equivalence $\Tt_1:\Cc\aslnm\cong (\Cc\aslmn^{rev} \boxtimes \Cc(\asl_{nm})_1)_B^{dys}$. 
One shows using Theorem \ref{brules}
that $B\cong \oplus_{i=0}^{m-1}\sigma_n^{\otimes i}\boxtimes \aLam_{ni}$ and that $\Tt_1$ sends
$\alam \in \Cc\aslnm$ to the dual of $(\alamt \boxtimes \aLam_{|\lambda|})\otimes B$.
However we feel that these results are more difficult  to use than Theorem \ref{equi}.
\end{rem}

\subsection{Mirror extensions} The task of constructing \'etale algebras in braided fusion categories is difficult. One general
approach for categories associated with affine Lie algebras is through the use of conformal embeddings as described in 
Section \ref{etce}. Feng Xu observed in \cite{X2} that one can use level-rank duality in order to construct further examples which do not come from conformal embeddings. One can summarize part of his theory categorically as the following consequence of Theorem \ref{equi}:

\begin{prop} \label{xmir}
Let $A\in \Cc\aslnm^0$ be an \'etale algebra. Then $\Tt(A)\in \Cc\aslmn^0$ also has a structure of \'etale algebra.
\end{prop}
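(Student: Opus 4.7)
The plan is to verify directly that any braid-reversing tensor equivalence preserves étale algebra structure, and then apply this principle to $\Tt$. As $\Tt$ is in particular a monoidal equivalence, it carries algebra objects to algebra objects: the multiplication $m\colon A\otimes A\to A$ and unit $u\colon\mathbf{1}\to A$ transport to morphisms $\Tt(m)$ and $\Tt(u)$ on $\Tt(A)$, using the monoidal constraint $\Tt(A\otimes A)\cong\Tt(A)\otimes\Tt(A)$ together with $\Tt(\mathbf{1})\cong\mathbf{1}$, and the associativity and unit axioms are preserved automatically.

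The only axiom that involves the braiding is commutativity, so I need to check that the braid-reversing property of $\Tt$ causes no trouble. By definition, $\Tt$ is braided with target $(\Cc\aslmn^0)^{rev}$, so $\Tt(c_{A,A})=c^{rev}_{\Tt(A),\Tt(A)}=c^{-1}_{\Tt(A),\Tt(A)}$. Applying $\Tt$ to the commutativity identity $m=m\circ c_{A,A}$ yields $\Tt(m)=\Tt(m)\circ c^{-1}_{\Tt(A),\Tt(A)}$; composing both sides on the right with $c_{\Tt(A),\Tt(A)}$ then gives the desired identity $\Tt(m)\circ c_{\Tt(A),\Tt(A)}=\Tt(m)$ in $\Cc\aslmn^0$.

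The remaining condition is semisimplicity of the module category. Since $\Tt$ is a tensor equivalence, it induces an equivalence $(\Cc\aslnm^0)_A\simeq(\Cc\aslmn^0)_{\Tt(A)}$ of module categories: a right $A$-action $M\otimes A\to M$ in $\Cc\aslnm^0$ transports to a right $\Tt(A)$-action on $\Tt(M)$ in $\Cc\aslmn^0$. Hence semisimplicity of $(\Cc\aslnm^0)_A$, guaranteed by the assumption that $A$ is étale, yields semisimplicity of $(\Cc\aslmn^0)_{\Tt(A)}$, completing the verification that $\Tt(A)$ is étale. I do not expect a real obstacle here: the only potentially delicate point, the interaction with the braiding, is benign because the commutativity axiom $m\circ c=m$ is equivalent to $m\circ c^{-1}=m$, so replacing a braiding by its inverse preserves the condition.
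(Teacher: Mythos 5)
Your proposal is correct and follows the same route the paper intends: the paper states Proposition \ref{xmir} without proof as an immediate consequence of Theorem \ref{equi}, and your argument simply supplies the standard verification that a braid-reversing tensor equivalence preserves \'etale algebras (transport of the algebra structure, the observation that $m\circ c=m$ is equivalent to $m\circ c^{-1}=m$, and the induced equivalence of module categories). Nothing is missing.
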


\begin{eg}(\cite[Xu]{X2}) 
There exists a conformal embedding $(\asl_2)_{10}\subset (\widehat{\mathfrak{so}}_5)_1$. The 
corresponding \'etale algebra obtained by restriction of the vacuum module is 
$A=\be \oplus \widehat{a}\in \Cc(\asl_2)_{10}$
where $a=(4,6)\in C_{2,10}$. Thus $A\in \Cc(\asl_2)_{10}^0$. Applying Proposition \ref{xmir}
we get an \'etale algebra $\Tt(A)\in \Cc(\asl_{10})_2^0\subset \Cc(\asl_{10})_2$ of the form
$\be \oplus \widehat{a^{\tau}}$ where $a^{\tau}=(0,0,0,1,0,0,0,1,0,0)\in C_{10,2}$. 
Notice that the algebra
$\Tt(A)$ is not associated with any conformal embedding since the {\em conformal dimension} of
$\widehat{a^\tau}$ equals 2. For other interesting examples see \cite{X2, X3}.
\end{eg}


\ \\

 \end{document}